\newtheorem{theorem}{Theorem}
\newtheorem{remark}{Remark}
\newtheorem{lemma}{Lemma}
\newtheorem{proposition}{Proposition}
\newtheorem{problem}{Problem}
\newtheorem{definition}{Definition}
\newenvironment{proof}
  {\par \noindent {\bf Proof\ }}
  {\hfill $\Box$ \par \medskip}
\newcommand{\cE}{\mathcal{E}}
\newcommand{\cG}{\mathcal{G}}
\newcommand{\cO}{\mathcal{O}}
\newcommand{\mC}{\mathbb{C}}
\newcommand{\mR}{\mathbb{R}}
\newcommand{\e}{\mathrm{e}}
\newcommand{\eps}{\epsilon}
\newcommand{\lam}{\lambda}
\newcommand{\weq}{w_\mathrm{eq}}
\newcommand{\bk}[1]{{\left\langle #1 \right\rangle}}
\newcommand{\abs}[1]{{\vert {#1} \vert}}
\newcommand{\norma}[1]{{\Vert {#1} \Vert}}
\newcommand{\pt}{\partial}
\newcommand{\ptt}{\frac{\pt}{\pt t}}
\newcommand{\ptx}[1]{\frac{\pt}{\pt x_{#1}}}
\DeclareMathOperator{\Exp}{\mathit{\mathcal{E}\!\mathit{xp}}}
\DeclareMathOperator{\Li}{\mathrm{Li}}
\DeclareMathOperator{\Tr}{Tr}
\DeclareMathOperator{\Op}{Op}
\DeclareMathOperator{\DIV}{div}
\numberwithin{equation}{section}
\numberwithin{theorem}{section}
\numberwithin{lemma}{section}
\numberwithin{definition}{section}
\numberwithin{proposition}{section}
\numberwithin{problem}{section}
\numberwithin{remark}{section}
\begin{document}

\begin{center}

{\bf \LARGE Derivation of isothermal quantum fluid equations 
\\[6pt] with Fermi-Dirac and Bose-Einstein statistics }

\par 
\vspace{24pt}
{\sc \large L.~Barletti}
\par 
\vspace{7pt}
Dipartimento di Matematica e Informatica ``U.Dini'', Firenze, Italy
\par 
\vspace{2pt}
{\tt \small luigi.barletti@unifi.it}

\par 
\vspace{20pt}
{\sc \large C.~Cintolesi}
\par 
\vspace{7pt}
Doctorate School in Environmental and Industrial Fluid Mechanics, \\
Universit\`a di Trieste, Italy
\par 
\vspace{2pt}
{\tt \small carlo.cintolesi@phd.units.it}

\end{center}

\vspace{8pt}

\begin{abstract}
By using the quantum maximum entropy principle we formally derive, from a underlying kinetic description, 
isothermal (hydrodynamic and diffusive) quantum fluid equations for particles with 
Fermi-Dirac and Bose-Einstein statistics. 
A semiclassical expansion of the quantum fluid equations, up to $\mathcal{O}(\hbar^2)$-terms, leads to
classical fluid equations with statistics-dependent quantum corrections, 
including a modified Bohm potential. 
The Maxwell-Boltzmann limit and the zero temperature limit are eventually discussed.  
\end{abstract}
\section{Introduction and main results}
\label{S1}
The theory of quantum fluid equations originates from a seminal paper by E.\ Madelung \cite{Madelung26}, 
who discovered that Schr\"odinger equation can be put in a hydrodynamic form 
(the {\em Madelung equations}, see Eq.\ \eqref{MadEq}).
These equations have the form of an irrotational, compressible 
and isothermal Euler system with an additional term, of order $\hbar^2$, interpreted as a ``quantum potential''
or a ``quantum pressure''.
This was later named {\em Bohm potential} after D.\ Bohm, who based on it his celebrated, although 
controversial, interpretation of quantum mechanics \cite{Bohm52a,Bohm52b,DuerrTeufel09}.
\par
Besides their undoubted theoretical importance, quantum fluid equations have become very interesting
also for applications, in particular to semiconductor devices modeling \cite{Jungel09}. 
Indeed, the fluid description of a quantum system has many practical advantages.
Not only it provides a description in terms of macroscopic variables with a direct physical interpretation
(such as density, current, temperature) but it is also amenable to semiclassical approximations, usually 
leading to fluid equations in a quasi-classical form (classical fluid equations with quantum corrections).
Such quasi-classical form is particularly suited for modeling purposes since it can be easily 
``contaminated'' with phenomenological elements (boundary conditions, external couplings and so on), 
that would be very difficult to incorporate within a purely quantum mechanical framework.
\par
Madelung equations describe the evolution of a pure (i.e.\ non-statistical) quantum state and 
(being basically equivalent to Schr\"odinger equation \cite{GasserMarkowich97})
are formally closed.
However, the most interesting case for applications is usually that of statistical systems,
in which any description in terms of a finite number of macroscopic moments is, in general,
not closed. 
Then, analogously to what happens in classical statistical mechanics, a central 
problem in the theory of quantum fluids is the closure of the moment equations. 
A commonly accepted solution to this problem is furnished by the quantum version, due to P.\ Degond
and C.\ Ringhofer \cite{DR03}, of the {\em maximum entropy principle}, a well-known paradigm 
from information theory, widely used in classical statistical mechanics and thermodynamics \cite{Levermore96}
as well as in many other disciplines (e.g.\ signal analysis).
By using the quantum maximum entropy principle (QMEP), various kind of quantum fluid models have been
deduced: drift-diffusion and energy transport \cite{DMR05}, SHE-model \cite{BourgadeEtAl06},
isothermal hydrodynamic \cite{DGM07,JM05}, non-isothermal hydrodynamic \cite{JMM06}, 
viscous hydrodynamic (Navier-Stokes) \cite{BrullMehats10}, spin (or pseudo-spin) drift-diffusion 
\cite{BM10,BF10} and hydrodynamic \cite{Zamponi12}.
Many other references can be found in Refs.\ \cite{Jungel09,Jungel12}.
\par
Although the QMEP was originally stated for a general convex entropy functional
\cite{DR03}, in all the quoted references explicit models are deduced only for Maxwell-Boltzmann statistics.%
\footnote{Reference \cite{DGM07} is a partial exception, because the fully-quantum model is deduced 
for a generic entropy. 
However, the semiclassical expansion assumes Boltzmann entropy.}
In the present paper we consider the QMEP for an entropy function
that incorporates different particle statistics (Fermi-Dirac, Maxwell-Boltzmann, Bose-Einstein, 
see Eq.\ \eqref{entropy}).
Then, we derive the corresponding isothermal fluid equations 
(drift-diffusion and hydrodynamic) and compute their explicit semiclassical expansions up to
$\cO(\hbar^2)$ terms.
\par
To our knowledge,  in the framework of QMEP, Fermi-Dirac and Bose-Einstein statistics have been so far 
considered  in Refs.\ \cite{TR10,TR11}, where a hierarchy of moment equations, in the spirit
of extended thermodynamics, is derived. 
Explicit (or partially explicit) semiclassical equations, with $\cO(\hbar^2)$ terms, 
are computed  for the first levels of the hierarchy (including the isothermal equations considered in the 
present paper).  
However, having assumed that quantum corrections only depend on the density and its derivatives, some terms
are missing that depend on the derivatives of the current (namely, the rotational terms of order $\hbar^2$ 
that appear instead in our Eqs.\ \eqref{semi_hydro}, \eqref{MB_hydro} and \eqref{T0_FD_hydro}).
Also Ref.\  \cite{JKP11} is worth to be mentioned, where a hierarchy of diffusive moment equations with 
Fermi-Dirac statistics is derived in the semiclassical limit (no $\cO(\hbar^2)$ corrections).
\par
Our derivation starts from a quantum kinetic level, represented by a one-particle, $d$-dimensional, 
Wigner equation \cite{Folland89,Wigner32,ZachosEtAl05}, with a BGK collisional term 
(see Eq.\ \eqref{WBGK}) that relaxes the system to a local equilibrium.
The local equilibrium Wigner function $\weq$ is assumed to be given by the QMEP which, generally speaking,
stipulates that the local equilibrium maximizes an entropy functional under the constraint that some of 
its macroscopic moments are given. 
Which moments are constrained depends on which kind of fluid equations we are interested in.
In our case, the moments are the particle density $n$, for the diffusive equations, and, in addition, the current 
$J= (J_1,\ldots, J_d)$  (or, equivalently, the velocity $u = J/n$) for the isothermal hydrodynamic equations.  
We choose an entropy functional (see Eq.\ \eqref{Edef}) that contains the information on the particle statistics 
and also accounts for the fixed equilibrium temperature (and, therefore, is a free-energy rather than
an entropy).
As we shall see in Subsec.\ \ref{Sec_MEP}, a solution $\weq$ of the constrained minimization problem 
can be formally written and depends on $d+1$ Lagrange multipliers, $A$ and $B = (B_1,\ldots,B_d)$, which
are implicitly related to the moments $n$ and $J$ because of the constraints.
After a suitable scaling of the Wigner-BGK equation, we identify two dimensionless parameters: a scaled relaxation 
time $\alpha$ and a scaled Planck constant $\eps$.
Then, in the limit of vanishing $\alpha$ (i.e., basically, assuming that the system has relaxed to the state $\weq$)
we can deduce moment equations for $n$ (diffusive) and for $n$ and $J$ (hydrodynamic), where the
extra moments are expressed in function of $A$ and $B$.
Therefore, the moments equations are formally closed because of the constraint relations.
The fully-quantum models obtained in this way, given by Eqs.\ \eqref{QHD} and \eqref{QDD}, are very implicit and, 
therefore, it is reasonable to look for approximated, but explicit, equations.
In particular, we look for the semiclassical approximation of Eqs.\ \eqref{QHD} and \eqref{QDD},
by assuming $\eps$ small. 
\par
The semiclassical approximation of the quantum fluid models requires the semiclassical expansions 
of $\weq$, which is particularly natural in the Wigner-Weyl-Moyal formalism \cite{Folland89,ZachosEtAl05}.
This expansion, which involves an interesting application of the Moyal calculus, as well as the computation of 
a variety of integrals of Fermi and Bose type (see Appendix \ref{polylog}), is carried out explicitly up
to order $\cO(\eps^2)$ and leads to the main result of the paper, represented by Theorems \ref{main_thm_1} 
and \ref{main_thm_2}.
The semiclassical equations that we obtain, Eqs.\ \eqref{semi_hydro} and \eqref{semi_diffu}, 
are the generalization of the semiclassical diffusive and hydrodynamic equations
derived in Refs.\ \cite{DGM07,DMR05,JM05} for Maxwell-Boltzmann statistics.
They have the form of their classical counterparts (compressible isothermal Euler system and drift-diffusion
equations) with quantum corrections (terms of order $\cO(\eps^2)$) as well as corrections coming from 
the particle statistics.
In particular, a term corresponding to a modified Bohm potential can be identified (see Eq.\ \eqref{Qdef}). 
It is worth remarking that in the Bose-Einstein case, and dimension $d \geq 3$, 
our equations are only valid by assuming that the fluid is entirely in the non-condensate phase or, equivalently, 
that the temperature  is uniformly supercritical (see Proposition  \ref{mu0} and  Remark \ref{rem_condens}).
\par
In the last part of the paper we perform a formal analysis of specific physical regimes, where
Eqs.\ \eqref{semi_hydro} and \eqref{semi_diffu} take particular forms.
We discuss the case of irrotational fluids, the Maxwell-Boltzmann limit (recovering the equations
derived in Refs.\ \cite{DGM07,DMR05,JM05}), and the vanishing-temperature limit.
The latter is particularly interesting in the Fermi-Dirac case (the only one in which 
Eqs.\  \eqref{semi_hydro} and \eqref{semi_diffu} have a regular behavior as $T \to 0$) and leads to 
Eqs.\ \eqref{T0_FD_hydro} and \eqref{T0_FD_diffu}, describing a so-called 
``completely degenerate fluid''. 
These equations contain power-law diffusive and rotational terms,
and a limit Bohm potential which differs from the usual one by just a
dimension-dependent constant factor.
The $T\to 0$ limit in the Maxwell-Boltzmann and Bose-Einstein cases shows a singular behavior and
reasonable results can only be given for the Bose-Einstein case with $d \leq 2$ and assuming 
that the fluid is irrotational.
\par
\smallskip
The outline of the paper is as follows.
Section \ref{S2} is devoted to the derivation of the quantum fluid equations:
in Subsecs.\ \ref{Sec_WBGK} and  \ref{Sec_Scaling} we introduce the kinetic Wigner-BGK equation and
its hydrodynamic and diffusive scalings;
in Subsec.\ \ref{SS2.3} we derive the equations for the moments $n$ and $J$ and, in Subsec.\ \ref{Sec_MEP}, 
we perform their formal closure by using the QMEP.
Section \ref{S3} is devoted to the semiclassical approximation of the quantum fluid models derived in the preceding section:
in Subsec.\ \ref{S3.1} we compute the semiclassical expansion of the local equilibrium Wigner function and, in particular,
of the Lagrange multipliers $A$ and $B$ as functions of the moments $n$ and $J$;
in Subsec.\ \ref{Sec_SFE} the expansion of $A$ and $B$ is used to derive the semiclassical equations 
\eqref{semi_hydro} and \eqref{semi_diffu}.
Finally, Sec.\ \ref{S4} is devoted to the analysis of the above-mentioned particular regimes: 
the irrotational fluid (Subsec.\ \ref{S4.1}), the Maxwell-Boltzmann limit (Subsec.\ \ref{S4.2}) and the 
zero-temperature limit (Subsec.\ \ref{S4.3}).
Some technical material has been placed in two appendices:
Appendix \ref{polylog} contains generalities about Fermi and Bose integrals as well as the computation
of related integrals that have been encountered in the paper;
Appendix \ref{AppB} contains some postponed proofs. 
\section{Quantum fluid equations}
\label{S2}
In this section we derive fully quantum fluid-dynamic equations of two types: isothermal hydrodynamic equations
and diffusive equations.
Such derivations, and the structure of the resulting equations, do not differ significantly from what is already well known in literature \cite{DGM07,Jungel09,Jungel12}.
The specific role played by statistics (and, therefore, the novelty of the present work) will be more explicit in the semiclassical 
expansion of the equations, which will be carried out in the remainder of the paper.
\subsection{The Wigner-BGK equation}
\label{Sec_WBGK}
The starting point of our derivation is the kinetic description of a one-particle quantum statistical state, 
given in terms of one-particle Wigner functions \cite{Wigner32,ZachosEtAl05}. 
Let us now briefly recall the basic definitions and properties.
\par
A mixed (statistical) one-particle quantum state for an ensemble of scalar particles  in $\mR^d$ 
(where  $d$ = 1,2 or 3 are the interesting values, e.g.\ for nano-electronic applications),
is described by a density operator $\varrho$, i.e.\ a bounded non-negative operator with unit trace, acting on $L^2(\mR^d,\mC)$.
The associated Wigner function, $w = w(x,p)$, $(x,p) \in \mR^{2d}$,  is given by the inverse Weyl quantization of $\varrho$,
\begin{equation}
\label{Wdef}
  w = \Op_\hbar^{-1}(\varrho),
\end{equation}
where the Weyl quantization of a phase-space function  (a ``symbol'') $a = a(x,p)$ is the operator $\Op_\hbar(a)$ 
formally defined by
\begin{equation}
\label{Weyl}
  \left[ \Op_\hbar(a)\psi \right](x)  =  \frac{1}{(2\pi\hbar)^d} \int_{\mR^{2d}} 
  a\left( \frac{x+y}{2}, p \right)\,\psi(y)\,\e^{i(x-y)\cdot p/\hbar}\,dy\,dp.
\end{equation}
The Wigner function has a more direct definition as the ``Wigner transform''  
\begin{equation}
\label{WigTransf}
  w(x,p) = \int_{\mR^d} 
  \varrho\left( x + \frac{\xi}{2},  x - \frac{\xi}{2} \right) \e^{-ip \cdot \xi/\hbar}  d\xi,
\end{equation}
of the density matrix $\varrho(x,y)$,
i.e.\ (with a little abuse of notation) the integral kernel of the density operator $\varrho$.
\par
The operator product translated at the level of symbols leads to the definition of Moyal (or ``twisted'') product
\begin{equation}
\label{Moyal}
   a \# b = \Op_\hbar^{-1} \left( \Op_\hbar(a) \Op_\hbar(b) \right),
\end{equation}
which possesses the formal semiclassical expansion 
\begin{equation}
\label{MoyalExpansion}
\begin{aligned}
 &a \# b = \sum_{k=0}^\infty \hbar^k a \#_k b, 
 \\
 &a \#_k b =  \frac{1}{(2i)^k} \sum_{\abs{\alpha} + \abs{\beta} = k}
\frac{(-1)^{\abs{\alpha}} }{ \alpha!\, \beta! }
\left(\nabla_x^\alpha \nabla_p^\beta a \right)\left(\nabla_p^\alpha \nabla_x^\beta b \right).
\end{aligned}
\end{equation}
In particular, $\#_0$ is the usual function product, $a \#_0 b =  ab$,
and $\#_1$ is the Poisson bracket 
\begin{equation}
\label{Moyal_1}
  a \#_1 b =   \frac{i}{2} \sum_{k=1}^{d}
   \left( \frac{\pt a}{\pt x_k} \frac{\pt b}{\pt p_k} - \frac{\pt a}{\pt p_k}\frac{\pt b}{\pt x_k} \right).
\end{equation}
The dynamics of the time-dependent Wigner function $w(t) = w(x,p,t)$ can be immediately deduced from the
dynamics of the corresponding density operator $\varrho(t)$, i.e.\ from the von Neumann equation
 (Schr\"odinger equation for mixed states)
\begin{equation}
\label{VNE}
  i\hbar\,\ptt \varrho(t) = \left[H,\varrho(t)\right] := H\varrho(t) - \varrho(t)H,
\end{equation}
where $H$ denotes the Hamiltonian operator of the system.
If $h = \Op_\hbar^{-1}(H)$ is the symbol of $H$, then, from Eqs.\ \eqref{VNE} and \eqref{Moyal} we obtain
the ``Wigner equation''
\begin{equation}
\label{WE1}
  i\hbar\,\ptt w(t) = \left\{h,w(t)\right\}_\# := h\# w(t) - w(t)\#h.
\end{equation}
Taking $h$ as the standard hamiltonian symbol 
\begin{equation}
\label{hDef}
 h(x,p) = \frac{\abs{p}^2}{2m} + V(x)
 \end{equation}
(where $m$ is the particle effective mass and $V$ is a one-particle potential), the Wigner equation \eqref{WE1} 
can be written in the more explicit form 
\begin{equation}
\label{WE}
  \ptt w(t) + \frac{p}{m}\cdot\nabla_x w(t) + \Theta_\hbar[V] w(t) = 0,
\end{equation}
where $\Theta_\hbar[V]w(t) = \frac{i}{\hbar}  \left\{V,w(t) \right\}_\#$ is given by
\begin{multline*}
\left[\Theta_\hbar[V]w(t)\right](x,p)  = 
\\
\frac{i}{\hbar} \int_{\mR^{2d}}\left[V\left(x+\frac{\xi}{2}\right)-V\left(x-\frac{\xi}{2}\right) \right]
\e^{i\xi\cdot(p'-p)/\hbar}\,w(x,p',t)\, \frac{d\xi\,dp'}{(2\pi\hbar)^d}.
\end{multline*}
\par
One of the most interesting properties of the Wigner function is that its moments
have a direct physical interpretation in terms of macroscopic fluid quantities, which makes Wigner 
functions an ideal tool for the derivation of quantum fluid equations.
In this paper we shall write equations, in different fluid regimes, for the first $1+d$ moments: the density
\begin{equation}
\label{nDef}
  n(x,t) = \frac{1}{(2\pi\hbar)^{d}}\int_{\mR^d}  w(x,p,t)\,dp = \varrho(x,x,t)
\end{equation}
and the $d$ components of the current
\begin{equation}
\label{JDef}
 J_k(x,t) = \frac{1}{(2\pi\hbar)^{d}}\int_{\mR^d}  p_k w(x,p,t)\,dp
 = \frac{\hbar}{2i} \left(\frac{\pt \varrho}{\pt x_k} - \frac{\pt \varrho}{\pt y_k} \right)(x,x,t),
\end{equation}
(where the corresponding expressions in terms of the time-dependent density matrix 
$\varrho(x,y,t)$ have also been shown).
\begin{remark}
\label{rem_phys_wig}
Our choice of defining the Wigner function as the inverse Weyl quantization of the density operator
implies that $w$ is a dimensionless quantity (this is apparent from Eq.\ \eqref{WigTransf}, recalling that
the density matrix $\varrho(x,y)$ has the physical dimensions of a number density in position space). 
However, the usual ``physical'' definition of Wigner function \cite{Wigner32,ZachosEtAl05} 
requires an extra factor $1/(2\pi\hbar)^d$,
so that the physical Wigner function has the dimensions of a number density in phase space.
This is the reason of the factor $1/(2\pi\hbar)^d$ appearing in Eqs.\ \eqref{nDef} and \eqref{JDef}.
\end{remark}
Now, by definition, a system is driven to a fluid regime by collisions. 
Following Degond, Ringhofer and M\'ehats \cite{DMR05,DR03}, we endow the Wigner equation \eqref{WE} 
with a collision mechanism of BGK type \cite{Arnold96}
\begin{equation}
\label{WBGK}
  \frac{\pt w}{\pt t} + \frac{p}{m}\cdot\nabla_x w + \Theta_\hbar[V] w =\frac{1}{\tau}\left(\weq[w] - w\right).
\end{equation}
Here, $\tau$ is a typical relaxation time and $\weq[w]$ is a Wigner function which represent the local equilibrium state
reached by the system because of collisions. 
As we shall see in the following, the central point of the whole derivation is that $\weq[w]$ is assumed to be the 
maximizer of a suitable quantum entropy functional, subject to the constraint of sharing certain moments with $w$ 
(namely, $n$ and $J$ in the isothermal hydrodynamic case, and $n$ in  the diffusive case). 
We remark that the one-particle potential $V$ accounts for other kinds of interactions, including mean-field Poisson or Hartree-like \cite{TR10,TR11} potentials.
\subsection{Scaling the Wigner-BGK equation}
\label{Sec_Scaling}
In order to write Eq.\ \eqref{WBGK} in the hydrodynamic and diffusive scalings, let us introduce a reference 
length $x_0$, time $t_0$ and energy $E_0$.
Reference temperature and momentum are naturally related to $E_0$ by
$$
  k_B T_0 = E_0, \qquad \frac{p_0^2}{m} =  E_0,
$$
where $k_B$ is the Boltzmann constant.
Then, in Eq.\ \eqref{WBGK} we  switch to dimensionless quantities
$$
  x \to x_0x, \qquad t \to t_0t, \qquad p \to p_0p, \qquad V \to E_0V, 
$$
(for the sake of simplicity the new dimensionless variables are denoted by the same symbols as 
the old ones), which yields
$$
  \frac{1}{t_0}\,\frac{\pt w}{\pt t} + \frac{p_0}{mx_0}p \cdot \nabla_x + \frac{E_0}{x_0p_0} \, \Theta_{\frac{\hbar}{x_0p_0}}[V] 
  = \frac{1}{\tau}\left(\weq[w] - w\right).
$$
Note that we have not to rescale the Wigner functions $w$ and $w_{\weq}$ because we are already using
dimensionless Wigner functions (see Remark \ref{rem_phys_wig}). 
We rewrite the last equation by introducing the {\em semiclassical parameter}
\begin{equation}
\label{epsDef}
 \eps = \frac{\hbar}{x_0 p_0}
\end{equation}
and the energy time scale 
$$
  t_E = \frac{mx_0}{p_0}
 $$ 
 (i.e.\ the order of time for a particle of kinetic energy $E_0$ to travel a distance $x_0$), obtaining:
\begin{equation}
\label{scaling_general}
  \frac{1}{t_0}\,\frac{\pt w}{\pt t} + \frac{1}{t_E}\,p \cdot \nabla_x w +   \frac{1}{t_E}\, \Theta_{\eps}[V] w = 
   \frac{1}{\tau} \left(\weq[w] - w\right).
\end{equation}
Now, two different scaling assumptions, corresponding to different fluid regimes, 
can be made.
\paragraph{Hydrodynamic regime.}
In this regime the system is observed on the time-scale $t_E$ 
and collisions are assumed to act on a much shorter time-scale; 
then we put
\begin{equation}
\label{scaling_hydro}
 \alpha := \frac{\tau}{t_E} \ll 1, \qquad  t_0 = t_E.
\end{equation}
The corresponding Wigner-BGK equation takes therefore the {\em hydrodynamic scaling} form:
\begin{equation}
\label{WBGK_hydro}
 \alpha\,\frac{\pt w}{\pt t} + \alpha\,p \cdot \nabla_x w +  \alpha\, \Theta_{\eps}[V] w = \weq[w] - w.
\end{equation}
\paragraph{Diffusive regime.}
In this regime the collisions are still assumed to act on a time-scale much shorter than $t_E$,  but
the system is observed on a time-scale much larger than $t_E$; then we put
\begin{equation}
\label{scaling_diffu}
 \alpha := \frac{\tau}{t_E} \ll 1,  \qquad \frac{t_E}{t_0} = \alpha
\end{equation}
(so that $t_0 = t_E^2/\tau$). 
The corresponding Wigner-BGK equation takes in this case the {\em diffusive scaling} form:
\begin{equation}
\label{WBGK_diffu}
 \alpha^2\,\frac{\pt w}{\pt t} + \alpha\,p \cdot \nabla_x w +  \alpha\, \Theta_{\eps}[V] w = \weq[w] - w.
\end{equation}
\par
Note, in both cases, the presence of two dimensionless parameters: $\eps$ and $\alpha$.
In the remainder of this section we shall deal with the fluid asymptotics, $\alpha \to 0$, 
leaving $\eps$ untouched; then, in the following sections, we shall work on the semiclassical 
expansion of the fluid equations for small $\eps$.
\begin{remark}
\label{ScaledWeyl}
In the new dimensionless variables, all the identities involving Weyl quantization and Moyal product
are obtained from the original ones by the formal substitution $\hbar \mapsto \eps$.
\end{remark}
\subsection{Derivation of quantum fluid equations}
\label{SS2.3}
First of all, let us introduce the short notation
$$
   \bk{w}(x,t) :=  \int_{\mR^d} w(x,p,t)\,dp.
$$
If $w$ is the Wigner function of our particle system, we are going to write down equations for the 
moments $n = \bk{w}$ and $J = \bk{p w}$, but we have to keep in mind that
the {\em true} density and current are $N_0 n$ and $p_0 N_0  J$, where 
\begin{equation}
\label{N0def}
  N_0 = \left(  \frac{p_0}{2\pi\hbar} \right)^d  = \left(  \frac{m k_B T_0}{(2\pi\hbar)^2}  \right)^{d/2}
\end{equation}
(see Eqs.\ \eqref{nDef} and \eqref{JDef}, and Remark \ref{rem_phys_wig}).
\subsubsection{Isothermal quantum hydrodynamic equations}
In order to derive isothermal hydrodynamic equations from Eq.\ \eqref{WBGK_hydro}, we have to 
assume that collisions conserve the number of particles and their momentum while making the system
relax towards a local equilibrium state $\weq[w]$ (to be completely described later on) 
at a constant temperature $T_\mathrm{ext}$.
Thus, we have to impose on $\weq[w]$ the moment constraints
\begin{equation}
\label{constraints_hydro}
 \bk{\weq[w]}  = n = \bk{w},
 \qquad
  \bk{p_i \weq[w]}  = J_i = \bk{p_i w},
\end{equation}
for  $i = 1,\ldots,d$.
Let now $w_\alpha$ be solution of Eq.\ \eqref{WBGK_hydro} and assume that the limit $w_\alpha \to w_0$ 
for $\alpha \to 0$ exists with finite moments $n = \bk{w_0}$ and $J = \bk{p w_0}$. 
Then, from \eqref{WBGK_hydro} and \eqref{constraints_hydro} we get $w_0 = \weq[w_0]$.
Taking the moments of both sides of Eq.\ \eqref{WBGK_hydro} and letting $\alpha \to 0$ we obtain 
$$
\begin{aligned}
 &\ptt \bk{\weq[w_0]} +  \ptx{i}  \bk{p_i \weq[w_0]} +  \bk{ \Theta_{\eps}[V] \weq[w_0]} = 0,
 \\
 &\ptt \bk{p_i \weq[w_0]} +  \ptx{j}  \bk{p_i p_j \weq[w_0]} +  \bk{ p_i \Theta_{\eps}[V] \weq[w_0]} = 0,
\end{aligned}
 $$
where the summation convention on repeated indices has been assumed.
From the semiclassical expansion of the potential operator,
\begin{equation}
\begin{aligned}
  \Theta_\eps[V] &= \frac{i}{\eps}  \left\{V,w(t) \right\}_\# 
   =  \frac{i}{\eps} \left( V \# w - w\# V\right)
\\
  &= - \sum_{k=0}^\infty (-1)^k \left(\frac{\eps}{2}\right)^{2k}  \sum_{\abs{\alpha}=2k+1} 
    \nabla_x^\alpha V \, \nabla_p^\alpha w
\end{aligned}
\end{equation}
(where \eqref{MoyalExpansion} was used, see also Remark \ref{ScaledWeyl}), we immediately obtain
\begin{equation}
\label{momTheta}
     \bk{ p_i \Theta_{\eps}[V] \weq[w_0]} =   \bk{ \weq[w_0]} \frac{\pt V}{\pt x_i} =  n\,\frac{\pt V}{\pt x_i},
\end{equation}
and, then, the moment equations read as follows:
\begin{equation}
\label{open_hydro_eqs}
\begin{aligned}
 &\frac{\pt n}{\pt t} +  \frac{\pt J_i}{\pt x_i} = 0
 \\
 &\frac{\pt J_i}{\pt t} +  \ptx{j}  \bk{p_i p_j \weq[w_0]} + n\, \ptx{i} V  = 0.
\end{aligned}
\end{equation}
If $\weq[w_0]$ can be uniquely specified as  a function of $n$ and $J$, from the constraints \eqref{constraints_hydro},
then the system \eqref{open_hydro_eqs} is formally closed. 
\subsubsection{Quantum diffusive equations}
Diffusive equations can be obtained from Eq.\ \eqref{WBGK_diffu} by using the ``Chapman-Enskog'' method.
We now only assume that collisions conserve the number of particles, which leads to the unique constraint
\begin{equation}
\label{constraints_diffu}
 \bk{\weq[w]}  = n = \bk{w}.
\end{equation}
Then, we assume that the solution $w_\alpha$ of Eq.\ \eqref{WBGK_diffu}, for $\alpha \to 0$, 
 has a limit  $w_\alpha \to w_0$ with finite density $n = \bk{w_0}$.
Letting $\alpha \to 0$ in Eq.\ \eqref{WBGK_diffu} we still obtain $w_0 = \weq[w_0]$ but, contrarily to the previous case, 
the equation for the density 
\begin{equation}
\label{diffu_aux}
 \alpha \,\ptt \bk{w_\alpha} +  \ptx{i}  \bk{p_i w_\alpha}  = 0
\end{equation}
only gives, in the limit, the condition
\begin{equation}
\label{CEconstraint}
 \bk{p \weq[w_0]}  = 0,
\end{equation}
i.e.\ the equilibrium state carries no current.
The diffusive equations must be sought at next order of the Chapman-Enskog expansion
$$
   w_\alpha = \weq[w_\alpha] + \alpha w_1.
$$
Substituting this ansatz into Eq.\ \eqref{WBGK_diffu}, and letting $\alpha \to 0$, yields
$$
  w_1 = - \left( p \cdot \nabla_x +  \Theta_{\eps}[V] \right)\weq[w_0] 
$$
and, therefore, from Eqs.\ \eqref{diffu_aux} and \eqref{momTheta}, we obtain the diffusive equation
\begin{equation}
\label{open_diffu_eqs}
\frac{\pt n}{\pt t} =  \ptx{i} \left( J_i  + n\, \frac{\pt V}{\pt x_i} \right),
 \qquad
 J_i =  \ptx{j}  \bk{p_i p_j \weq[w_0]} .
\end{equation}
Once again, if $\weq[w_0]$ can be uniquely specified as  a function of $n$ from the constraint \eqref{constraints_diffu},
then Eq.\ \eqref{open_diffu_eqs} is formally closed.
\par
The ``closure'' of Eqs.\ \eqref{open_hydro_eqs} and \eqref{open_diffu_eqs} will be the central issue of the remainder
of the paper.
\subsection{Maximum entropy closure}
\label{Sec_MEP}
Following Refs.\ \cite{DMR05,DR03}, we assume that the local equilibrium state $\weq[w]$ satisfies 
a quantum maximum entropy principle (QMEP), which basically states that 
$\weq[w]$ is the most probable state compatible with the information we have about it.
In our case, such information is: 
\begin{enumerate}
\item
the temperature has a constant value $T_\mathrm{ext}$;
\item
collisions conserve the number of particles;
\item
collisions conserve also the momentum in the hydrodynamic regime.
\end{enumerate}
Point 1 implies that $\weq[w]$ should be a minimizer of the free-energy  
(rather than a maximizer of the entropy) \cite{DMR05}.
Points 2 and 3 imply that $\weq[w]$ is subject to the constraints \eqref{constraints_hydro} in the hydrodynamic
case, or to the single constraint \eqref{constraints_diffu} in the diffusive case.
\par
Let  $s : \mR^+ \to \mR$ be a regular, convex, function and $\varrho$ a density operator.
We can define the {\em von Neumann entropy} \cite{vonNeumann} of the state $\varrho$ 
as $\Tr\{ -s(\varrho) \}$, where $s(\varrho)$ is given by the functional calculus on self-adjoint 
operators and $\Tr$ denotes the operator trace.
It is worth remarking that we are using dimensionless quantities (see Subsection \ref{Sec_Scaling}); 
the dimensional definition of entropy would be $\Tr\{ -k_B s(\varrho/N_0) \}$, where $N_0$ is given by \eqref{N0def}.
The corresponding free-energy at temperature $T_\mathrm{ext}$ is
\begin{equation}
\label{Edef}
 \cE(\varrho) = \Tr\left\{ H\varrho + T s(\varrho)\right\},
\end{equation}
where $H$ is the (scaled) Hamiltonian and 
\begin{equation}
\label{Tdef}
  T = \frac{T_\mathrm{ext}}{T_0} =  \frac{k_B T_\mathrm{ext}}{E_0}
\end{equation}
is the scaled external temperature.
The Wigner-Weyl correspondence (see Subsection \ref{Sec_WBGK}) allows us to define the
free-energy of a Wigner function $w$ simply as $\cE\left( \Op_\eps(w) \right)$.
\par
All this considered, we shall assume the local-equilibrium Wigner function $\weq[w]$ to be 
solution of the following constrained  minimization problem.
\begin{problem}
\label{MEP}
Let $n = \bk{w}$ and $J = \bk{p w}$. 
Find a Wigner function $\weq[w]$ that minimizes the functional 
$\cE\left( \Op_\eps(f) \right)$ among all Wigner functions $f$ 
that satisfy 
$$
\begin{aligned}
 & \bk{f} = n, \quad  \bk{p f} = J   \qquad &\text{(hydrodynamic case)}
 \\
 &\bk{f} = n, &\text{(diffusive case)}.
\end{aligned}
$$
\end{problem}
In Ref.\ \cite{DR03} is formally proven the following necessary condition%
\footnote{
A rigorous proof of existence and uniqueness of the constrained minimization problem has been
recently obtained by M\'ehats and Pinaud \cite{MP10} for the one-dimensional case with periodic boundary conditions.
}
for $\weq[w]$.
\begin{theorem}
\label{T1}
A necessary condition for $\weq[w]$ to be a solution of Problem  \ref{MEP}
is that $d+1$  Lagrange multipliers $A$ and $B = (B_1,\ldots,B_d)$, functions of $x$ and $t$,
exist such that
$$
   \weq[w] = \cG_{A,B},
$$
where
\begin{equation}
\label{ggeneric}
 \cG_{A,B} =  \Op_\eps^{-1} \left\{ (s')^{-1} \left( \frac{\Op_\eps\left(h_{A,B} \right)}{T}  \right) \right\},
\end{equation} 
and
\begin{equation}
\label{hABdef}
\begin{aligned}
 &h_{A,B}(x,p,t) = \frac{\abs{p-B(x,t)}^2}{2} - A(x,t), \quad &\text{(hydrodynamic case)}
 \\
 &h_{A,B}(x,p,t) = \frac{\abs{p}^2}{2} - A(x,t), &\text{(diffusive case)}.
\end{aligned}
\end{equation}   
\end{theorem}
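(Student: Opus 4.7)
The plan is to apply the method of Lagrange multipliers to the constrained minimization of Problem \ref{MEP}, using the Wigner--Weyl duality to express the Wigner-moment constraints as operator-valued traces, and then inverting the resulting Euler--Lagrange equation by the functional calculus of self-adjoint operators.

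First I would recast the moment constraints operator-theoretically. Thanks to the Parseval-type identity $\int a(x,p)\,b(x,p)\,dx\,dp = (2\pi\eps)^d \Tr\{\Op_\eps(a)\Op_\eps(b)\}$, pairing the density constraint with a test function $A(x)$ gives $\int A\,\bk{f}\,dx \propto \Tr\{\Op_\eps(A)\,\varrho\}$, and pairing the current constraint with $B(x)=(B_1,\ldots,B_d)$ gives $\int B\cdot\bk{p\,f}\,dx \propto \Tr\{\Op_\eps(B\cdot p)\,\varrho\}$. Taking $A(x,t)$ and $B(x,t)$ themselves as function-valued Lagrange multipliers (absorbing the factor $(2\pi\eps)^d$ into them), the Lagrangian for the hydrodynamic case is
$$L(\varrho) = \Tr\{H\varrho + T\,s(\varrho)\} - \Tr\{\Op_\eps(A + B\cdot p)\,\varrho\} + \text{terms independent of }\varrho,$$
and the diffusive case is identical with $B\equiv 0$. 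Formal stationarity in $\varrho$, using the trace-derivative identity $\delta\Tr\{s(\varrho)\} = \Tr\{s'(\varrho)\,\delta\varrho\}$, yields the operator Euler--Lagrange equation $H + T\,s'(\varrho) = \Op_\eps(A + B\cdot p)$.

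Next I would complete the square to identify the claimed form. With $H = \Op_\eps(\abs{p}^2/2 + V)$, rearranging gives $T\,s'(\varrho) = \Op_\eps(A + B\cdot p - \abs{p}^2/2 - V)$, and completing the square in $p$ rewrites the symbol as $-\abs{p-B}^2/2 + (A - V + \abs{B}^2/2)$. Absorbing the $x$-dependent terms $-V$ and $\abs{B}^2/2$ into a redefined ``chemical potential'' multiplier (still called $A$, a harmless reparametrization since $A$ was already an arbitrary function of $x$), and with the overall sign fixed by the conventional choice of sign for the Lagrange multipliers, the symbol matches $h_{A,B}$ in \eqref{hABdef}. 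Strict convexity of $s$ makes $s'$ strictly monotone and hence invertible, so applying $(s')^{-1}$ via functional calculus gives $\varrho = (s')^{-1}(\Op_\eps(h_{A,B})/T)$; taking $\Op_\eps^{-1}$ yields $\weq[w] = \cG_{A,B}$.

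The chief technical hurdle is justifying the trace differentiation formula $\delta\Tr\{s(\varrho)\} = \Tr\{s'(\varrho)\delta\varrho\}$ when $\delta\varrho$ does not commute with $\varrho$. A naive derivation treating $\varrho$ as a scalar breaks down, but the identity survives at the trace level: expanding $s(\varrho + t\,\delta\varrho)$ to first order in $t$ via a Duhamel-type formula (or via the spectral decomposition of $\varrho$) and using the cyclicity of the trace recovers the claimed expression. A secondary subtlety is the choice of admissible $\delta\varrho$: these must be self-adjoint and, strictly speaking, preserve the linear moment constraints to first order; but because the constraints are already linear in $\varrho$, the Lagrange-multiplier formulation handles them automatically, and the Euler--Lagrange equation holds at the level of Weyl symbols pointwise in $(x,p)$, which is what yields the stated necessary condition.
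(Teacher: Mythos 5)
The paper does not prove Theorem \ref{T1} itself---it defers to Ref.~\cite{DR03}---and your proposal reconstructs essentially that standard formal argument: linear moment constraints rewritten as traces against $\Op_\eps(A)$ and $\Op_\eps(B\cdot p)$ via the Parseval identity, the first-order trace formula $\delta\Tr\{s(\varrho)\}=\Tr\{s'(\varrho)\,\delta\varrho\}$ (you correctly identify the non-commutativity of $\delta\varrho$ with $\varrho$ as the real technical point, resolved by Duhamel expansion plus cyclicity of the trace), the operator Euler--Lagrange equation, completion of the square in $p$, and inversion of $s'$ by functional calculus. At the formal level at which the theorem is stated, this is sound.

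One bookkeeping step is glossed over incorrectly, though the conclusion survives. Completing the square gives $T s'(\varrho)=\Op_\eps\bigl(-\tfrac12\abs{p-B}^2+\tilde A\bigr)=-\Op_\eps(h_{\tilde A,B})$, and the minus sign on the kinetic term comes from $H$, so it cannot be ``fixed by the conventional choice of sign for the Lagrange multipliers'': the multipliers control only the signs of $A$ and $B\cdot p$, not of $\abs{p}^2/2$. The resolution is that for the entropy \eqref{entropy} one actually has $(s')^{-1}(y)=(\e^{-y}+\lam)^{-1}$ rather than $(\e^{y}+\lam)^{-1}$ as stated after \eqref{entropy}, so that $\varrho=(s')^{-1}\bigl(-\Op_\eps(h_{A,B})/T\bigr)=\bigl[\exp\bigl(\Op_\eps(h_{A,B})/T\bigr)+\lam\bigr]^{-1}$; the two sign conventions cancel and your derivation lands exactly on \eqref{MEPg}, which is the formula the rest of the paper uses. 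You should state this cancellation explicitly rather than attribute the sign to the multipliers.
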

Of course, the Weyl quantization $\Op_\eps$ acts on functions of $x$ and $p$ (see Subsec.\ \ref{Sec_WBGK}),
$t$ being just a parameter.
Note that the Lagrange multipliers $A, B_1, \ldots, B_d$ furnish the necessary degrees of freedom to satisfy the constraints.
Note also that the hydrodynamic case contains the diffusive as a particular case corresponding to $B=0$.
This fact allows us to treat the two cases at once, the latter being simply obtained by taking $B=0$.
The Lagrange multiplier $A$ is the so-called chemical potential.
\par
So far, $s$ is a generic entropy function (minus the entropy, to be precise).
A further piece of information, namely the statistics of indistinguishable particles, can be inserted by choosing 
a suitable form of $s$.
In this paper we consider a typical family of entropy functions, dependent on the real parameter $\lam$, 
of the form
\begin{equation}
\label{entropy}
 s(f) =  f\log f + \lam^{-1} (1 - \lam f)\log(1- \lam f)
\end{equation}   
(where, for $\lam = 0$,  $s(f) = f \log f - f$ has to be intended as a limit).
For such $s$ we have
\begin{equation}
 (s')^{-1}(h) = \frac{1}{\e^{h} + \lam}
\end{equation}   
and, then, Eq.\ \eqref{ggeneric} is specialized in
\begin{equation}
\label{MEPg}
 \cG_{A,B} =  \Op_\eps^{-1} \left\{  \left[   \exp \left(\frac{\Op_\eps\left(h_{A,B} \right)}{T} \right) + \lam\right]^{-1} \right\}.
\end{equation} 
As usual, the parameter $\lam$ has been introduced in order to consider different cases at once, 
the most important being of course:
$$
  \lam = 
  \left\{
\begin{aligned}
 1,& \quad &\text{Fermi-Dirac (FD) statistics,}
\\ 
  0,& \quad &\text{Maxwell-Boltzmann (MB) statistics,}
\\
  -1,& \quad &\text{Bose-Einstein (BE) statistics.}
\end{aligned}
\right.
$$   
\par
The quantum hydrodynamic/diffusive equations \eqref{open_hydro_eqs}/\eqref{open_diffu_eqs} 
can now be formally closed by assuming that the local equilibrium Wigner function $\weq[w]$ 
is given by the QMEP.
Then, according to Theorem \ref{T1}, $\weq[w] = \cG_{A,B}$,
where the  Lagrange multipliers $A$ and $B$ are related to the moments $n = \bk{w}$ and $J = \bk{pw}$
through the constraints \eqref{constraints_hydro}/\eqref{constraints_diffu}. 
Then, in Eqs.\  \eqref{open_hydro_eqs} and \eqref{open_diffu_eqs}  the extra moment 
$\bk{p_i p_j \weq[w_0]} = \bk{p_i p_j \cG_{A,B}}$
can be viewed (at least in principle) as a function of  $n = \bk{w_0}$ and $J = \bk{pw_0}$, which means
that the equations are closed.
\par
The following Proposition is proven in Ref.\ \cite{DGM07} for MB entropy, using the density-operator
formalism. 
The proof given there is indeed independent on the choice of the entropy function and so the result is certainly
valid also in the present case.
However, we decided to give a proof (in Appendix), just because it may be interesting to see how it looks
like  in the Wigner formalism.\footnote{%
The proof given in Ref.\ \cite{DGM07}, however, is still more general because the 
density-operator formalism covers the case of a system confined in a domain $\Omega \in \mR^d$, while the
Wigner formalism is valid only in the whole-space case.
}
\begin{proposition}
\label{pr_Qclosure}
Let $\cG_{A,B}$ be given by  Eq.\ \eqref{MEPg} with $\bk{\cG_{A,B}} = n$ and $\bk{p\,\cG_{A,B}} = J$;
then:
\begin{equation}
\label{Qclosure}
   \ptx{j} \bk{p_i p_j \cG_{A,B}} =
   \ptx{j} (J_i  B_j) +  \left(J_j - nB_j \right) \frac{\pt B_j}{\pt x_i} + n \frac{\pt A}{\pt x_i}.
\end{equation}
\end{proposition}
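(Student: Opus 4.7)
The proof hinges on the observation that $\Op_\eps(\cG_{A,B})$ commutes with $\Op_\eps(h_{A,B})$: this is immediate from \eqref{ggeneric}, where $\cG_{A,B}$ is produced by functional calculus from $h_{A,B}$. Passing to integral kernels, the commutation becomes the PDE identity
\[
(H_x - H_y)\,\varrho(x,y) = 0,
\]
where $\varrho(x,y)$ is the kernel of $\Op_\eps(\cG_{A,B})$, and $H_x, H_y$ denote the differential operator $H_{A,B}=\tfrac{1}{2}\bigl(P-B(\cdot)\bigr)^{2}-A(\cdot)$ applied in the first or the second variable. Writing out the right-action via integration by parts is a routine step; it yields an explicit expression for $H_y\varrho$ that differs from $H_x\varrho$ by sign flips on the two terms linear in the momentum operator.

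I would then pass to the center-of-mass variables $X=(x+y)/2$, $\xi=x-y$, and introduce $\rho(X,\xi) := \varrho(X+\xi/2,\,X-\xi/2)$. Up to explicit factors of $\eps$ and $i$ coming from the Wigner transform, the relevant moments are encoded at $\xi=0$ as $n \propto \rho(X,0)$, $J_i \propto \pt_{\xi_i}\rho|_{\xi=0}$, and $\bk{p_ip_j\cG_{A,B}} \propto \pt_{\xi_i}\pt_{\xi_j}\rho|_{\xi=0}$. Each building block of $H_x-H_y$ has a definite parity under $\xi\to-\xi$: the symmetric combinations $(\DIV B)(x)+(\DIV B)(y)$ and $B(x)\cdot\nabla_x+B(y)\cdot\nabla_y$ are even, while the antisymmetric differences $|B(x)|^2-|B(y)|^2$ and $A(x)-A(y)$ are odd. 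This parity structure is what produces the cancellations needed to get a clean identity at $\xi=0$.

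The core computation is then to apply $\pt_{\xi_i}$ to $(H_x-H_y)\rho = 0$ and evaluate at $\xi = 0$. Using the Leibniz rule together with the parities above, one finds: the Laplacian contribution $-\tfrac{\eps^2}{2}(\Delta_x-\Delta_y)\rho$ produces exactly $\sum_j \pt_{x_j}\bk{p_ip_j\cG_{A,B}}$, namely the left-hand side of \eqref{Qclosure}; the divergence-$B$ and gradient-$B$ contributions combine to give $\pt_{x_j}(J_iB_j)+J_j\pt_{x_i}B_j$; the $|B|^2$ contribution gives $-nB_j\pt_{x_i}B_j$; and the $A$ contribution gives $n\pt_{x_i}A$. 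Collecting these reproduces \eqref{Qclosure}. The principal obstacle is not conceptual but one of careful bookkeeping: the sign conventions from the integration by parts for the right-action, the parity classification of each $\xi$-summand, and the explicit conversion factors between $\rho$-derivatives and the physical moments $n, J_i, \bk{p_ip_j\cG_{A,B}}$ must all be tracked consistently, since any slip produces incorrect coefficients in the final identity.
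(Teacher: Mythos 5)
Your argument is correct in outline, but it takes a genuinely different route from the paper's proof in Appendix \ref{proof1}. The paper stays entirely in the Wigner--Moyal calculus: it writes the streaming term as $p_j \pt \cG/\pt x_j = \frac{i}{\eps}\{\frac{1}{2}\abs{p}^2,\cG\}_\#$, splits $\frac{1}{2}\abs{p}^2 = Th + p_jB_j - \frac{1}{2}\abs{B}^2 + A$ with $h = h_{A,B}/T$, kills the $\{h,\cG\}_\#$ term by functional calculus, evaluates the Moyal bracket with the $p$-linear symbol $p_jB_j$ exactly (it truncates at the Poisson bracket), and then takes the $\bk{p_i\,\cdot}$ moment with integrations by parts in $p$. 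You instead express the same commutation, $[\Op_\eps(h_{A,B}),\Op_\eps(\cG_{A,B})]=0$, at the level of integral kernels as $(H_x-H_y)\varrho=0$ and extract the moment identity by differentiating in $\xi$ on the diagonal. Both proofs exploit the identical key fact --- $\cG_{A,B}$ is a function of $h_{A,B}$ under functional calculus, so the (Moyal) commutator vanishes --- in two equivalent representations; yours is essentially the density-operator proof of Degond, Gallego and M\'ehats that the paper explicitly cites and deliberately departs from (see the discussion and footnote preceding Proposition \ref{pr_Qclosure}), the stated advantage of the kernel route being that it extends to systems confined to a bounded domain, while the Wigner route is whole-space only. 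What the Moyal route buys is that the cancellation of the kinetic part and the reduction of $\frac{i}{\eps}\{p_jB_j,\cG\}_\#$ to first-order derivatives are manifest, so the $p$-integrations by parts replace your parity bookkeeping in $\xi$. Your intermediate claims (the parity classification under $\xi\to-\xi$, and which kernel term produces which piece of \eqref{Qclosure}) are all consistent with the target identity; the only caveat is that you have sketched rather than executed the sign-sensitive steps --- the transpose of the $P\cdot B + B\cdot P$ part under the right action and the $(i\eps)^k$ conversion factors between $\xi$-derivatives of $\varrho$ and the moments $n$, $J_i$, $\bk{p_ip_j\cG_{A,B}}$ --- which, as you yourself note, is where the entire risk of the computation sits.
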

\begin{proof}
See Appendix  \ref{proof1}.
\end{proof}
Equation \eqref{Qclosure} represent the formal closure of Eqs.\ \eqref{open_hydro_eqs} 
and \eqref{open_diffu_eqs}, obtained by taking $\weq[w_0] = \cG_{A,B}$.
Then, we can conclude this section by summarizing the fully-quantum hydrodynamic
and diffusive models with FD, MB or BE statistics.\footnote{%
In the Bose-Einstein case, these models are only suited to describe the non-condensate phase;
we shall discuss this point later on, in the semiclassical framework (see Proposition \ref{mu0}
and Remark \ref{rem_condens}).
}
\paragraph{Isothermal quantum hydrodynamic model}
\begin{subequations}
\label{QHD}
\begin{align}
\label{QHDa}
&
\left\{ 
\begin{aligned}
 &\frac{\pt n}{\pt t}  +  \frac{\pt J_i}{\pt x_i}  = 0
 \\
 &\frac{\pt J_i}{\pt t}  +   \ptx{j} (J_i  B_j) +  \left(J_j - nB_j \right) \frac{\pt B_j}{\pt x_i}  + n \ptx{i} (A + V) = 0.
\end{aligned}
\right.
\\[6pt]
\label{QHDb}
 &\left\{
 \begin{aligned}
 &\bk{\cG_{A,B}} = n,
 \\  
 &\bk{p\,\cG_{A,B}} = J,
 \end{aligned}
 \right.
\\[6pt]
\label{QHDc}
 &\Op_\eps (\cG_{A,B}) =  \left[   \exp \left(   \Op_\eps \left( \frac{\abs{p-B}^2}{2T} - \frac{A}{T} \right) \right) + \lam\right]^{-1}. 
\end{align}
\end{subequations}
\paragraph{Quantum diffusive model}
\begin{subequations}
\label{QDD}
\begin{align}
\label{QDDa}
&\frac{\pt n}{\pt t} =  \ptx{i} \left(  n \ptx{i} (A+V) \right),
\\[6pt]
\label{QDDb}
 &\bk{g_A} = n, 
\\[6pt]
\label{QDDc}
 &\Op_\eps (g_A) =  \left[   \exp \left(   \Op_\eps \left( \frac{\abs{p}^2}{2T} - \frac{A}{T} \right) \right) + \lam\right]^{-1}.
\end{align}
\end{subequations}
Note that the two models are constituted by the moment equations and an implicit relation
linking the Lagrange multipliers to the moments.
Such structure is rather involved, because it implies solving Eq.\ \eqref{QHDb} or Eq.\ \eqref{QDDb} for $A$ and $B$,
where $\cG_{A,B}$ is a complicated object defined in terms of back and forth Weyl quantization.
Although models of this kind are amenable to numerical implementation \cite{Gallego04,Gallego05}, it is certainly 
convenient to look for approximated, but more explicit, models. 
In particular,  it is natural to look for a semiclassical ($\eps \ll 1$) approximation of systems 
\eqref{QHD} and \eqref{QDD}.
This is what the remainder of the paper will be devoted to.
\section{Semiclassical approximation of the fluid models}
\label{S3}
In this section we perform a formal semiclassical expansion of the quantum hydrodynamic and diffusive 
models \eqref{QHD} and  \eqref{QDD} up to order $\eps^2$.
We shall derive explicit expressions of the Lagrange multipliers $A$ and $B = (B_1,\ldots, B_d)$ as
functions of the unknown moments $n$ and $J= (J_1,\ldots, J_d)$ neglecting terms of order higher than $\eps^2$ 
(as we shall see, the neglected terms will be actually of order $\eps^4$).
Once such expressions are obtained, they can be substituted for $A$ and $B$ 
in Eqs.\  \eqref{QHDa}/\eqref{QDDa}, yielding therefore semiclassical hydrodynamic/diffusive equations. 
\subsection{Semiclassical expansion of the Lagrange multipliers}
\label{S3.1}
The first step is the computation of the semiclassical expansion of 
$\cG_{A,B}$, the local equilibrium Wigner function given by Eqs.\ \eqref{MEPg} and \eqref{hABdef}.
We recall that the hydrodynamic and the diffusive cases can be unified at this stage, since the latter 
corresponds to the special case $B=0$.
Then, throughout this subsection, we shall work with the most general equilibrium function $\cG_{A,B}$.
\par
\smallskip
In order to avoid cumbersome notations, let us simply denote $\cG_{A,B}$ by $\cG$ and define
\begin{equation}
\label{newh}
 h := \frac{h_{A,B}}{T} = \frac{\abs{p-B}^2}{2T} - \frac{A}{T}
\end{equation}
(not to be confused with the symbol $h$ used for original Hamiltonian \eqref{hDef}).
The following proposition, based on a simple remark, is actually the key point for the computation 
of the semiclassical expansion.
\begin{proposition}
\label{keyprop}
Let us temporarily assume that $A$ and $B$ do not depend on $\eps$ and
consider the formal semiclassical expansion of $\cG \equiv \cG_{A,B}$:
\begin{equation}
\label{semiG}
  \cG = \cG_0 + \eps \cG_1 + \eps^2 \cG_2 + \cdots.
\end{equation}
Moreover, let $\Exp(h) = \Op_\eps^{-1}\left[ \exp\left(\Op_\eps (h)\right)\right]$ be the ``quantum exponential'' 
(so that $\cG = \Exp(-h)$, for $\lam = 0$) and let
\begin{equation}
\label{semiE}
 \Exp(h)= \Exp_0(h) + \eps\, \Exp_1(h) + \eps^2 \Exp_2(h)  + \cdots
\end{equation}
be its formal semiclassical expansion.
Then,
\begin{subequations}
\label{gexp}
\begin{align}
\label{gexpA}
  &\cG_0 = \frac{1}{\e^h + \lam},
\\
\label{gexpB}
  & \cG_{2n+1} = 0,   \qquad n \geq 0,
\\
\label{gexpC}
  &\cG_{2n} =    - \sum_{m = 0}^{n-1}\, \sum_{k+\ell+m = n}  \frac{\Exp_{2k}(h)\, \#_{2\ell}\, \cG_{2m}}{\e^h + \lam},
  \qquad n \geq 1
\end{align}
\end{subequations}
where $\#_{2\ell}$ are the even terms of the (scaled) Moyal product expansion \eqref{MoyalExpansion}.
\end{proposition}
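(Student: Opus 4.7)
The plan is to translate the definition of $\cG$ into a symbol-level identity and then exploit the parity of the Moyal product. From $\Op_\eps(\cG) = [\exp(\Op_\eps(h))+\lambda]^{-1}$ together with $\exp(\Op_\eps(h)) = \Op_\eps(\Exp(h))$, acting on both sides of $[\Op_\eps(\Exp(h))+\lambda]\Op_\eps(\cG)=I$ yields the two symbol equations
\[
\Exp(h)\#\cG + \lambda\,\cG \;=\; 1 \;=\; \cG\#\Exp(h)+\lambda\,\cG.
\]
Averaging them kills the antisymmetric part of $\#$: the elementary identity $a\#_\ell b = (-1)^\ell\, b\#_\ell a$, obtained by relabeling $\alpha\leftrightarrow\beta$ in \eqref{MoyalExpansion}, gives the symmetrized equation
\[
\sum_{\ell\text{ even}} \eps^\ell\, \cG \#_\ell \Exp(h) + \lambda\,\cG \;=\; 1,
\]
in which by construction only even orders of the Moyal product appear. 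This identity will be the backbone of the argument.

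Next I would establish the companion fact that $\Exp(h)$ itself has only even-order contributions in $\eps$. Writing $\Exp(h)=\sum_n h^{\#n}/n!$ and applying the same averaging trick to the two associativity rewritings $h^{\#n} = h\#h^{\#(n-1)} = h^{\#(n-1)}\#h$, one obtains $h^{\#n} = \sum_{\ell\text{ even}}\eps^\ell\, h\#_\ell h^{\#(n-1)}$; a straightforward induction on $n$ then shows that every $h^{\#n}$, and therefore $\Exp(h)$, is an even formal series in $\eps$.

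With both facts in hand I would substitute the expansions $\cG=\sum_m \eps^m\cG_m$ and $\Exp(h)=\sum_j \eps^{2j}\Exp_{2j}(h)$ into the symmetrized equation and compare coefficients at each order $\eps^N$:
\[
\sum_{2k+2\ell+m=N} \cG_m \#_{2\ell} \Exp_{2k}(h) + \lambda\,\cG_N \;=\; \delta_{N,0}.
\]
At $N=0$ this immediately yields $(\e^h+\lambda)\cG_0=1$, establishing \eqref{gexpA}. For odd $N=2n+1$, parity forces $m$ to be odd; isolating the only leading summand $(k,\ell,m)=(0,0,2n+1)$ and using the inductive hypothesis $\cG_m=0$ for odd $m<2n+1$ (base case $N=1$), the remaining equation reduces to $(\e^h+\lambda)\cG_{2n+1}=0$, giving \eqref{gexpB}. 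For even $N=2n\geq 2$, splitting off the $(k,\ell,m)=(0,0,n)$ contribution yields $(\e^h+\lambda)\cG_{2n} = -\sum_{k+\ell+m=n,\,m<n}\cG_{2m}\#_{2\ell}\Exp_{2k}(h)$; dividing by $\e^h+\lambda$ and using the symmetry of $\#_{2\ell}$ (even order) to swap the two factors produces the form \eqref{gexpC}.

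The main obstacle, as I see it, is keeping the parity bookkeeping honest. The single identity $a\#_\ell b = (-1)^\ell b\#_\ell a$ does essentially all the work, but it must be invoked in three logically distinct places — to symmetrize the defining equation, to reduce $\Exp(h)$ to an even-$\eps$ series, and inside the induction that kills $\cG_{2n+1}$ — and one has to be careful about mixed cross-terms where an odd $\cG_m$ would couple to a nontrivial $\Exp_{2k}(h)$ through an odd $\#_\ell$. The averaging trick is precisely what makes such spurious contributions vanish cleanly. Aside from that the argument is pure coefficient bookkeeping, and division by $\e^h+\lambda$ is legitimate in all three statistics (the BE case requiring subcritical chemical potential, as flagged in the paper), so the recursion closes and determines every $\cG_{2n}$ uniquely.
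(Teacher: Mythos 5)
Your proposal is correct and follows essentially the same route as the paper: symmetrize the two identities $(\Exp(h)+\lam)\#\cG = \cG\#(\Exp(h)+\lam) = 1$ to kill the odd (antisymmetric) Moyal terms, then extract coefficients of $\eps^N$ and run the parity induction to get \eqref{gexpA}--\eqref{gexpC}. The only difference is that you supply a self-contained (and valid, at the formal level) proof that $\Exp(h)$ is an even series in $\eps$ via the averaged identity for $h^{\# n}$, whereas the paper simply cites this fact from the literature.
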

\begin{proof}
Let $H = \Op_\eps(h)$ and $G = \Op_\eps(\cG) = (\e^H + \lam)^{-1}$.
Then, from the relation 
$$
  (\e^H + \lam) G = G (\e^H + \lam) =  I
$$
and the definition of the Moyal product \eqref{Moyal}, we get
$$
  \left(\Exp(h) + \lam\right) \# \cG = \cG \# \left(\Exp(h) + \lam\right) =  1,
$$
that is
$$
  \frac{\left(\Exp(h) + \lam\right) \# \cG +\cG \# \left(\Exp(h) + \lam\right)}{2} =  1.
$$
Now we substitute in this identity the semiclassical expansions of $\Exp(h)$, $\cG$ and $\#$,
and use the following facts:
\begin{enumerate}
\item[(i)]
$\Exp_n(h) = 0$ for odd $n$ (see e.g.\ Ref.\ \cite{DMR05});
\item[(ii)]
$\#_n$ is symmetric for even $n$ and antisymmetric for odd $n$ (see Eq.\ \eqref{MoyalExpansion}).
\end{enumerate}
After the substitution, equating the coefficients of equal powers of $\eps$ yields
$( \e^h + \lam) \cG_0 = 1$ (i.e.\ Eq.\ \eqref{gexpA}) and 
\begin{equation}
\label{aux1}
   \sum_{2k+2\ell+m' = n}  \Exp_{2k}(h) \#_{2\ell}\, \cG_{m'}  = 0
\end{equation}
for $n \geq 1$.
For $n=1$, we immediately get $\cG_1=0$.
If $n>1$ is odd, the sum in \eqref{aux1} has only terms $m' \leq n$ with $m'$ odd and then, 
by induction, we can conclude that $\cG_n = 0$,  which proves Eq.\ \eqref{gexpB}.
Hence, only terms with even $m'$ survive and, therefore, we put $m' = 2m$ in \eqref{aux1}, which 
can be rewritten as follows:
$$
  \sum_{k+\ell+m = n}  \Exp_{2k}(h) \#_{2\ell}\, \cG_{2m}  = 0, \qquad n \geq 1.
$$
Isolating the term with $m=n$, and using $\Exp_0(h) = \e^h$, yields Eq.\ \eqref{gexpC}.
\end{proof}
Equation \eqref{gexpC} recursively relates the terms of the semiclassical expansion of $\cG$ 
to $\cG_0$ (which is a ``classical'' distribution) and to the terms of the expansion of $\Exp(h)$, 
which are well known in literature (see e.g.\ Ref.\ \cite{DMR05}).
\begin{remark}
The expansion of $\Exp(h)$ can obtained as follows. 
For $\beta \geq 0$ let $F(\beta) = \e^{\beta H}$, with $H = \Op_\eps(h)$. 
Then $F(\beta)$ satisfies the semigroup equation
$$
  F'(\beta) = H F(\beta), \qquad F(0) = I,
$$
and, therefore, $f(\beta) = \Op_\eps^{-1}\left(F(\beta)\right) = \Exp(\beta h)$ satisfies
$$
 f'(\beta) = h \# f(\beta), \qquad  f(0) = 1,
$$
whose solution can be easily expanded at the different orders in $\eps$, yielding the expansion of 
$\Exp(h)$ for $\beta = 1$ (see Ref.\ \cite{DMR05} for details).
A similar procedure could be used to get the expansion of $\cG$ from an evolution equation.
Indeed, if we now define $F(\beta) = (\e^{\beta H} + \lam)^{-1}$, it is not difficult to see that
$F$ satisfies the nonlinear semigroup equation
$$
   F'(\beta) = - H F(\beta) (I-\lam F(\beta)), \qquad F(0) = (1+\lam)^{-1}I,
$$
and $f(\beta) = \Op_\eps^{-1}\left(F(\beta)\right)$ satisfies
$$
   f'(\beta) = - h \# f(\beta) \# (1-\lam f(\beta)), \qquad f(0) = (1+\lam)^{-1},
$$
whose semiclassical expansion yields the terms $\cG_n$ for $\beta = 1$.
Note that in the MB case, $\lam = 0$, we get the equation for $\Exp(-h)$.
Note also that in the Bose-Einstein case, $\lam = -1$ the initial datum is singular.
Of course, using Eqs.\ \eqref{gexp} is a much simpler approach (if the expansion of $\Exp(h)$ is known) 
but the differential approach may be of some interest, e.g.\ from the numerical point of view.
\end{remark}
The expansion \eqref{gexp} will now be used to obtain semiclassically approximated expressions
for $A$ and $B$ as functions of $n$ and $J$ from the constraints \eqref{QHDb}, that
we rewrite here:
\begin{equation}
\label{constr}
  \bk{\cG} = n, \qquad \bk{p\,\cG} = J.
\end{equation}
We remark that the expansion \eqref{gexp} refers to $\cG$ as a function of $\eps$, because 
we provisionally assumed $A$ and $B$ of order 1.
We have now to consider that $A$ and $B$ have an expansion in powers of $\eps$, which is determined
from the constraint equations according to the following lemma.
\begin{lemma}
\label{lemmaTaylor}
Let $A$ and $B$ be solutions of the constraint system \eqref{constr}.
Then, they can be formally expanded as follows:
\begin{equation}
\label{semiA}
   A = A^{(0)} + \eps^2 A^{(2)} + \cO(\eps^4),
   \qquad
   B = B^{(0)} + \eps^2 B^{(2)} +\cO(\eps^4),
\end{equation}
where $A^{(0)}$, $A^{(2)}$, $B^{(0)}$ and $B^{(2)}$ satisfy the following system:
\begin{subequations}
\label{expmu}
\begin{align}
\label{expmuA}
&\begin{pmatrix}
  \bk{\cG_0}
\\
  \bk{p_i\,\cG_0}
\end{pmatrix}_{\! (A^{(0)},B^{(0)})} 
= 
\begin{pmatrix}
   n \\ J_i
\end{pmatrix}
\\[6pt]
\label{expmuB}
&\begin{pmatrix}
   \frac{\pt \bk{\cG_0}}{\pt A}
   &
   \frac{\pt \bk{\cG_0}}{\pt B_j}
\\[3pt]
   \frac{\pt \bk{p_i \cG_0}}{\pt A}
   &
   \frac{\pt \bk{p_i \cG_0}}{\pt B_j}
\end{pmatrix}_{\! (A^{(0)},B^{(0)})}
\begin{pmatrix}
   A^{(2)} \\ B^{(2)}_j
\end{pmatrix}
= 
- \begin{pmatrix}
  \bk{\cG_2} \\   \bk{p_i \cG_2}
\end{pmatrix}_{\! (A^{(0)},B^{(0)})}
\end{align}
\end{subequations}
where $\cG_0$ and $\cG_2$ are given by Proposition \ref{keyprop} and 
the subscript $(A^{(0)},B^{(0)})$ means that the expression has to be evaluated in $A = A^{(0)}$ and
 $B = B^{(0)}$. 
\end{lemma}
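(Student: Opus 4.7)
The plan is to carry out the formal Taylor expansion of the constraint system \eqref{constr} in powers of $\eps$, viewing $A$ and $B$ themselves as formal power series in $\eps$, and to exploit Proposition~\ref{keyprop} to rule out odd-order contributions.

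First I would write tentative expansions $A = A^{(0)} + \eps A^{(1)} + \eps^2 A^{(2)} + \cO(\eps^3)$ and analogously for $B$, and then regard the left-hand sides of \eqref{constr} as functions of two sources of $\eps$-dependence: an \emph{explicit} one, entering through the Weyl/Moyal calculus in the definition of $\cG_{A,B}$, and an \emph{implicit} one, entering through $(A,B)$. Freezing $(A,B)$ to arbitrary values momentarily allows me to invoke Proposition~\ref{keyprop}, so that the explicit expansion reads $\cG_{A,B} = \cG_0 + \eps^2 \cG_2 + \cO(\eps^4)$ with no odd terms. Expanding also with respect to the deviations $A - A^{(0)}$ and $B - B^{(0)}$ via a standard Taylor series, I would then substitute everything into \eqref{constr} and match like powers of $\eps$.

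At order $\eps^0$ the constraints become $(\bk{\cG_0}, \bk{p_i \cG_0})|_{(A^{(0)},B^{(0)})} = (n, J_i)$, which is exactly \eqref{expmuA}. At order $\eps^1$ the explicit expansion contributes nothing because $\cG_1 = 0$ by Proposition~\ref{keyprop}, so only the first variation in $(A,B)$ survives; this yields the homogeneous linear system
\[
 \begin{pmatrix}
   \frac{\pt \bk{\cG_0}}{\pt A} & \frac{\pt \bk{\cG_0}}{\pt B_j}
   \\[3pt]
   \frac{\pt \bk{p_i \cG_0}}{\pt A} & \frac{\pt \bk{p_i \cG_0}}{\pt B_j}
 \end{pmatrix}_{\! (A^{(0)},B^{(0)})}
 \begin{pmatrix} A^{(1)} \\ B^{(1)}_j \end{pmatrix} = 0,
\]
whose only solution (provided the Jacobian is non-singular) is $A^{(1)} = 0$, $B^{(1)} = 0$. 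Feeding this back into the order-$\eps^2$ identity annihilates all quadratic and mixed contributions arising from the Taylor expansion in $(A,B)$, and what remains is precisely the linear system \eqref{expmuB} relating $(A^{(2)},B^{(2)})$ to the explicit second-order term $\cG_2$ furnished by Proposition~\ref{keyprop}.

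The main technical obstacle I expect is justifying the invertibility of the Jacobian that appears both in \eqref{expmuB} and in the step killing $(A^{(1)},B^{(1)})$. Since $\cG_0 = (\e^h + \lam)^{-1}$ is a shifted Fermi/Bose/Boltzmann distribution, the entries of this matrix are momentum moments of $\e^h/(\e^h + \lam)^2$ and its translates, so it carries the structure of a Gram matrix associated with the second variation of the free-energy; invertibility should then follow from the strict convexity of the entropy functional underlying the QMEP, modulo the supercritical-temperature restriction needed in the Bose-Einstein case (Proposition~\ref{mu0}). A secondary point is that the Taylor expansion in $(A,B)$ involves differentiating an $\eps$-dependent object defined through Weyl quantization, but the derivatives $\pt_A \cG$ and $\pt_{B_j} \cG$ commute with the semiclassical expansion up to the orders considered here, so the formal manipulations are internally consistent.
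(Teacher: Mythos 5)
Your proposal is correct and follows essentially the same route as the paper: a formal Taylor expansion of the constraint map in its two sources of $\eps$-dependence (explicit, via the Moyal calculus with $\cG_1=0$ from Proposition~\ref{keyprop}, and implicit, via $(A,B)$), with the order-$\eps$ equation being homogeneous and hence forcing $A^{(1)}=B^{(1)}=0$, which in turn kills the quadratic Taylor terms at order $\eps^2$ and yields \eqref{expmuB}. The only point you leave implicit is the stated $\cO(\eps^4)$ remainder in \eqref{semiA}: since your ansatz stops at $\cO(\eps^3)$ you only obtain $A=A^{(0)}+\eps^2A^{(2)}+\cO(\eps^3)$, whereas the paper pushes one order further and notes that the order-$\eps^3$ equation is again homogeneous (because $\cG_3=0$ and $\mu^{(1)}=0$), so that $A^{(3)}=B^{(3)}=0$; your argument extends verbatim to cover this.
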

\begin{proof}
See Appendix \ref{proof_lemmaTaylor}.
\end{proof}
Equations \eqref{expmu} involve moments of the functions $\cG_0$ and $\cG_2$.
Such moments will be explicitly expressed in terms of the functions 
$$
 \phi_s(z) := - \frac{1}{\lam} \Li_s (-\lam \e^z)
$$
($\Li_s$ denoting the polylogarithm function of order $s$ \cite{Lewin81}), which are 
extensively described in Appendix \ref{polylog}.
\par
We begin with the computation of $A^{(0)}$ and $B^{(0)}$, which are determined by 
Eq.\ \eqref{expmuA} alone. 
\begin{proposition}
\label{mu0}
Let $n_d := (2\pi T)^\frac{d}{2}$ and assume 
\begin{equation}
\label{main_assum}
0 < n < \left\{ 
\begin{aligned}
&\frac{n_d\,\zeta\!\left(\frac{d}{2}\right)}{\abs{\lam}},
&\quad &\text{if $\lam < 0$ and $d \geq 3$},
\\
 &\infty,  &\quad &\text{otherwise}
\end{aligned}
\right.
\end{equation}
(where $\zeta$ is the Riemann zeta function).
Then, the solution of system \eqref{expmuA} is
\begin{equation}
\label{A0B0}
 A^{(0)} = T \phi_{\frac{d}{2}}^{-1} \! \left(\frac{n}{n_d} \right),
 \qquad
 B_i^{(0)} = u_i\,,
\end{equation}
where $u = J/n$, and $\phi_{\frac{d}{2}}^{-1}$ is the inverse
of the function $\phi_{\frac{d}{2}}$ (Definition \ref{DefiA1}).
\end{proposition}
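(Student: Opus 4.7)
The approach is a direct computation of the two integrals $\bk{\cG_0}$ and $\bk{p_i \cG_0}$ in closed form, followed by inversion of the resulting $(d{+}1){\times}1$ system of algebraic identities. By Proposition \ref{keyprop}, $\cG_0 = (\e^h + \lam)^{-1}$ with $h = |p-B|^2/(2T) - A/T$, so no Moyal-calculus subtleties appear at this leading order.

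First I would evaluate the density constraint. Performing the Gaussian-flavoured change of variables $p = B + \sqrt{T}\,q$, one gets
\begin{equation*}
  \bk{\cG_0} = T^{d/2} \int_{\mR^d} \frac{dq}{\e^{|q|^2/2 - A/T} + \lam}
  = (2\pi T)^{d/2}\, \phi_{\frac{d}{2}}\!\left(\frac{A}{T}\right)
  = n_d\, \phi_{\frac{d}{2}}\!\left(\frac{A}{T}\right),
\end{equation*}
where the middle equality is the standard Fermi/Bose Gaussian integral, which is one of the identities tabulated in Appendix \ref{polylog} in terms of $\phi_s$. Setting this equal to $n$ and inverting formally yields $A^{(0)} = T\,\phi_{d/2}^{-1}(n/n_d)$. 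Next, for the current, the same substitution gives
\begin{equation*}
  \bk{p_i \cG_0} = \int_{\mR^d} (B_i + \sqrt{T} q_i)\,
  \frac{T^{d/2}\,dq}{\e^{|q|^2/2 - A/T} + \lam} = B_i\, \bk{\cG_0},
\end{equation*}
since the $q_i$ term vanishes by the odd symmetry of the integrand in $q_i$. Combined with $\bk{\cG_0} = n$ this produces $B_i^{(0)} = J_i / n = u_i$.

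The only substantive issue is the well-posedness of the inversion $\phi_{d/2}^{-1}(n/n_d)$, and this is precisely where assumption \eqref{main_assum} enters. The properties of $\phi_s$ established in Appendix \ref{polylog} (strict monotonicity and the identification of its range) show that $\phi_{d/2}$ is a bijection from its natural domain onto $(0,+\infty)$ for $\lam \geq 0$, and from $(-\infty,0]$ onto $(0,\zeta(d/2)/|\lam|]$ for $\lam<0$ and $d\geq 3$ (with the supremum finite because $\zeta(d/2)$ converges iff $d/2>1$); for $\lam<0$ and $d\in\{1,2\}$ the singularity of $\zeta$ at $s \leq 1$ again makes the range all of $(0,+\infty)$. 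Hence \eqref{main_assum} is exactly the condition under which $n/n_d$ lies in $\mathrm{Range}(\phi_{d/2})$, making $A^{(0)}$ a well-defined real number. The hardest (though still minor) part of the argument is thus cataloguing the range of $\phi_{d/2}$ case by case; everything else reduces to quoted Fermi/Bose integrals and symmetry.
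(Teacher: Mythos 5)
Your proof is correct and follows essentially the same route as the paper's: both evaluate $\bk{\cG_0}$ and $\bk{p_i\cG_0}$ as Gaussian Fermi/Bose integrals (Eq.\ \eqref{phint}, resp.\ \eqref{FDmomA}), obtain $B^{(0)}_i=u_i$ from odd symmetry, and reduce the assumption \eqref{main_assum} to the statement that $n/n_d$ lies in the range of the increasing function $\phi_{d/2}$, which is bounded by $\zeta(d/2)/\abs{\lam}$ precisely when $\lam<0$ and $d\geq 3$ by \eqref{RZ}. The only cosmetic imprecision is attributing the divergence for $d\leq 2$ to a ``singularity of $\zeta$ at $s\leq 1$'' rather than to $\Li_s(z)\to+\infty$ as $z\to 1^-$ for $s\leq 1$, which is the statement the paper actually invokes.
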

\begin{proof}
Since
$$
  \cG_0 = \left(\e^{ \frac{\abs{p-B}^2}{2T} - \frac{A}{T}} + \lam\right)^{-1},
$$
by using Eq.\  \eqref{phint} we obtain
\begin{equation}
\label{momG0}
  \bk{\cG_0} = n_d\, \phi_\frac{d}{2} \! \left(\frac{A}{T}\right),
  \qquad
  \bk{p_i\cG_0}  = B_i \bk{\cG_0}.
\end{equation}
Then, from Eq.\ \eqref{expmuA} we immediately get $B_i^{(0)} = J_i/n = u_i$, 
while for $A^{(0)}$ we have to solve the equation
\begin{equation*}
\label{AUXmom0}
  \phi_\frac{d}{2} \! \left(\frac{A^{(0)}}{T}\right) = \frac{n}{n_d}.
\end{equation*}
Now, $\phi_\frac{d}{2}(z)$ is an increasing function of $z$ (as it is apparent from Eq.\ \eqref{phint}),
and ranges from $0$ to $+\infty$ unless $\lam < 0 $ and $d \geq 3$, in which case
it reaches a maximum value $\zeta\!\left(\frac{d}{2}\right)/\abs{\lam}$ as $z \to 0^-$
(this follows from Eq.\ \eqref{RZ}) .
Thus, in the assumption \eqref{main_assum}, the above equation can be uniquely solved
and we can write $A^{(0)} =  T \phi_{\frac{d}{2}}^{-1} \big(\frac{n}{n_d} \big)$.
\end{proof}
\begin{remark}
\label{rem_condens}
The condition $n < n_d\,\zeta\big(\frac{d}{2}\big)/\abs{\lam}$, for $\lam < 0$ and $d \geq 3$, reflects the fact
that, at dimension 3 or higher, BE statistics is able to ``accommodate'' only a limited number of particles.
The exceeding particles, according to Bose-Einstein theory (and to experiments as well), are expected to 
fall in the fundamental state, giving rise to the Bose-Einstein condensate.
Since $n_d = (2\pi T)^\frac{d}{2}$, we have that the particles are all non-condensate if $T$ is
above the {\em critical temperature}
$$
  T_c = \frac{1}{2\pi}\left( \frac{\abs{\lam}\, n}{\zeta\!\left(\frac{d}{2}\right)}\right)^{2/d},
$$
or, using the physical density $N_0 n$ (with $N_0$ given by \eqref{N0def}),
$$
  T_c = \frac{2\pi \hbar^2}{m k_B} \left( \frac{\abs{\lam}\,n}{\zeta\!\left(\frac{d}{2}\right)}\right)^{2/d}.
$$
Our discussion is therefore limited to the non-condensate, or supercritical, phase.
The full description of a quantum fluid equation with BE statistics would require a coupling between the 
non-condensate and the condensate phases, which is matter for future work.
\end{remark}
Next, we compute $A^{(2)}$ and $B^{(2)}$ from Eq.\ \eqref{expmuB}.
This involves the computation of $\bk{\cG_2}$ and $\bk{p_i \cG_2}$, which is done in next lemma.
\begin{lemma}
The moments  $\bk{\cG_2}$ and $\bk{p_i \cG_2}$, for generic Lagrange multipliers $A$ and $B$,
are given by  
\label{lemmaG2}
\begin{subequations}
\label{momG2}
\begin{align}
\label{momG2A}
&\begin{aligned}
  \bk{\cG_2} &= \frac{n_d}{24 T^2} \left[ 2\Delta A - \frac{\pt B_j}{\pt x_k}  
  \left(\frac{\pt B_j}{\pt x_k} - \frac{\pt B_k}{\pt x_j} \right)  \right] \phi_{\frac{d}{2}-2} \left(\frac{A}{T}\right)
  \\
  &+  \frac{n_d}{24 T^3} \abs{\nabla A}^2 \, \phi_{\frac{d}{2}-3} \left(\frac{A}{T}\right),
  \end{aligned}
\\[6pt]
\label{momG2B}
  &\bk{p_i \cG_2} = B_i \bk{\cG_2} + \frac{n_d}{12 T}\, \ptx{j} 
  \left[ \left( \frac{\pt B_i}{\pt x_j} - \frac{\pt B_j}{\pt x_i} \right)  \phi_{\frac{d}{2}-1} \left(\frac{A}{T}\right)  \right],
\end{align}
\end{subequations}
with $n_d = (2\pi T)^\frac{d}{2}$.
\end{lemma}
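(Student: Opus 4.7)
The plan is to unfold the recursive formula for $\cG_{2n}$ given by Proposition \ref{keyprop} at $n=1$, specialized to $h=h_{A,B}/T$, and then reduce everything to $p$-integrals that can be evaluated via the identities in Appendix \ref{polylog}. Specifically, for $n=1$ the recursion gives
\begin{equation*}
 \cG_2 = -\cG_0\bigl[\Exp_2(h)\,\cG_0 + \e^{h}\,\#_2\,\cG_0\bigr],
\end{equation*}
so the first task is to make the two bracketed terms fully explicit. The term $\Exp_2(h)$ is known from Ref.\ \cite{DMR05}: it is the second-order coefficient in the semiclassical expansion of the quantum exponential and can be written as $\e^h$ times a polynomial in the first and second derivatives of $h$ (the only contributions that survive due to $\Exp_{2n+1}=0$). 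The term $\e^h\#_2\cG_0$ is computed directly from \eqref{MoyalExpansion} with $k=2$: it is a linear combination of products of second cross-derivatives in $x$ and $p$.

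Next, I would compute the relevant derivatives of $h$ and $\cG_0$. Since
\begin{equation*}
 \frac{\pt h}{\pt p_k}=\frac{p_k-B_k}{T},\qquad
 \frac{\pt h}{\pt x_k}=-\frac{1}{T}(p_j-B_j)\,\frac{\pt B_j}{\pt x_k}-\frac{1}{T}\frac{\pt A}{\pt x_k},
\end{equation*}
derivatives of $h$ are polynomials of degree at most two in $(p-B)$ with coefficients depending on derivatives of $A$ and $B$. Since $\cG_0=(\e^h+\lam)^{-1}$, its derivatives produce powers of $\cG_0$ and $\e^h\cG_0$ multiplied by the same objects. After combining the two contributions, $\cG_2$ becomes a (rather long) polynomial in $(p-B)$ whose coefficients depend on first and second derivatives of $A$ and $B$, multiplied by quotients of the form $\e^h\cG_0^{m}$.

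At this stage I would pass to the shifted momentum $q=p-B$; the integrands become functions of $|q|^2$ times monomials in $q$. Odd powers of $q$ integrate to zero, leaving only even ones, which gives $B_i\bk{\cG_2}$ as the only $B$-linear contribution to $\bk{p_i\cG_2}$. The remaining integrals are all of the type treated in Appendix \ref{polylog}, yielding prefactors $n_d\,\phi_{d/2-s}(A/T)$ for various $s$: the differentiation rule $\phi_s'=\phi_{s-1}$ is exactly what turns the $x$-derivatives of $\cG_0$ into the lower-index polylogarithmic functions appearing in \eqref{momG2A} and \eqref{momG2B}. Indices $\tfrac{d}{2}-2$ and $\tfrac{d}{2}-3$ thus correspond, respectively, to second- and third-order combined derivatives of $\phi_{d/2}(A/T)$.

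The main obstacle will be the bookkeeping: many terms appear with combinatorial coefficients from $\#_2$ and from $\Exp_2$, and one must carefully recognize that those involving the symmetric part $\tfrac{1}{2}(\pt_k B_j+\pt_j B_k)$ cancel between the Moyal and exponential contributions or recombine into $\Delta A$ and $|\nabla A|^2$, while the antisymmetric combination $\pt_k B_j-\pt_j B_k$ survives and gets packaged, after one integration by parts in $q$, into the curl-like structure appearing in \eqref{momG2B}. Once the symmetric/antisymmetric separation is performed cleanly, the identification with the stated right-hand sides is direct.
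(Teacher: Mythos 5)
Your plan coincides with the paper's own proof (Appendix \ref{proof_lemmaG2}): unfold the recursion \eqref{gexpC} at $n=1$, insert the known $\Exp_2(h)$ from Ref.~\cite{DMR05} and the explicit $\#_2$ term, pass to $q=p-B$, discard odd moments so that $B_i\bk{\cG_2}$ is the only $B$-linear piece of $\bk{p_i\cG_2}$, and evaluate the surviving even moments via the Fermi/Bose integrals of Appendix~\ref{polylog} together with $\phi_s'=\phi_{s-1}$. The only small inaccuracy is your attribution of the divergence structure in \eqref{momG2B} to an integration by parts in $q$: in fact it arises by recombining the $\Delta B_i$, $\partial_j\partial_k B_l$ and $\partial_k A\,(\partial_j B_k-\partial_k B_j)$ contributions through the identity $T^{-1}\partial_k A\,\phi_{\frac{d}{2}-2}=\partial_k\,\phi_{\frac{d}{2}-1}$, but this does not affect the soundness of the approach.
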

\begin{proof}
See Appendix \ref{proof_lemmaG2}.
\end{proof}
Thanks to Lemma \ref{lemmaG2}, we are now ready to compute the second-order terms in the 
semiclassical expansion \eqref{semiA} of the Lagrange multipliers, which are obtained from 
system \eqref{expmuB}.
\begin{proposition}
\label{mu2}
The solution of system \eqref{expmuB} is
\begin{subequations}
\label{A2B2}
\begin{align}
\label{A2}
 &\begin{aligned}
   A^{(2)} &= 
   \frac{1}{24T}  
   \frac{\pt u_j}{\pt x_k} \left(\frac{\pt u_j}{\pt x_k} - \frac{\pt u_k}{\pt x_j} \right) 
   \frac{\phi_{\frac{d}{2}-2}^0(n)}{\phi_{\frac{d}{2}-1}^0(n)}
\\
   &- \frac{1}{24} \left[ 
   \frac{2\Delta A^{(0)}(n)}{T} \,
   \frac{\phi_{\frac{d}{2}-2}^0(n)}{\phi_{\frac{d}{2}-1}^0(n)}
   + \frac{\abs{\nabla A^{(0)}(n)}^2}{T^2} \,
   \frac{\phi_{\frac{d}{2}-3}^0(n)}{\phi_{\frac{d}{2}-1}^0(n)}
    \right]
   \end{aligned}
\\[6pt]
\label{B2}
 &B_i^{(2)} = \frac{n_d}{12 T n}\, \ptx{j} 
  \left[ \left( \frac{\pt u_j}{\pt x_i} - \frac{\pt u_i}{\pt x_j} \right)  \phi_{\frac{d}{2}-1}^0(n)  \right],
\end{align}
\end{subequations}
where $A^{(0)}(n)$  is given by Eq. \eqref{A0B0}, $u = J/n$, $n_d = (2\pi T )^\frac{d}{2}$ and 
\begin{equation}
\label{phi0def}
  \phi_s^0(n) = \phi_s\left(\frac{A^{(0)}(n)}{T} \right) 
  = \phi_s \left(   \phi_{\frac{d}{2}}^{-1}\left(\frac{n}{n_d} \right) \right).
\end{equation}
\end{proposition}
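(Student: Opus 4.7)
The plan is to treat \eqref{expmuB} as a linear system of size $(d+1)\times(d+1)$ whose coefficient matrix is obtained by differentiating \eqref{momG0} with respect to $A$ and $B_j$, and whose right-hand side is supplied by Lemma \ref{lemmaG2} evaluated at the zeroth-order Lagrange multipliers from Proposition \ref{mu0}. The structural observation that makes the system tractable is that $\cG_0 = \bigl(\e^{\abs{p-B}^2/(2T)-A/T}+\lam\bigr)^{-1}$ depends on $B$ only through the translation $p \mapsto p-B$, so by translation invariance of the $p$-integral $\bk{\cG_0}$ is independent of $B$ while $\bk{p_i\cG_0}=B_i\bk{\cG_0}$. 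Consequently the Jacobian is block lower-triangular and the system decouples: the scalar top equation determines $A^{(2)}$, after which the vector equation yields $B^{(2)}$.

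Concretely, using the polylogarithm identity $\phi_s' = \phi_{s-1}$ from Appendix \ref{polylog} and evaluating at $(A^{(0)}(n),u)$, the Jacobian reduces to
\[
  \begin{pmatrix}
    \dfrac{n_d}{T}\,\phi_{\frac{d}{2}-1}^0(n) & 0 \\[4pt]
    u_i\,\dfrac{n_d}{T}\,\phi_{\frac{d}{2}-1}^0(n) & n\,\delta_{ij}
  \end{pmatrix}.
\]
The top row then gives
\[
   A^{(2)} = -\frac{T}{n_d\,\phi_{\frac{d}{2}-1}^0(n)}\,\bk{\cG_2}\big|_{(A^{(0)}(n),u)},
\]
and inserting \eqref{momG2A} with $B=u$ while distributing the prefactor produces \eqref{A2}: the term quadratic in $\nabla B$ becomes $\partial u_j/\partial x_k(\partial u_j/\partial x_k - \partial u_k/\partial x_j)$, and the overall sign rearrangement then gives the displayed first line.

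For the $B_i^{(2)}$ equation, back-substitution of $A^{(2)}$ into the $i$-th row yields $n B_i^{(2)} = u_i\,\bk{\cG_2}|_0 - \bk{p_i\cG_2}|_0$. The decisive cancellation is this: \eqref{momG2B} decomposes $\bk{p_i\cG_2}|_0$ as $u_i\bk{\cG_2}|_0$ plus a divergence of an antisymmetric tensor in $\nabla u$, so the two $u_i\bk{\cG_2}|_0$ contributions annihilate one another and only the rotational term survives; relabelling the derivatives in the residual bracket $\ptx{j}\!\bigl[(\pt B_i/\pt x_j-\pt B_j/\pt x_i)\phi_{\frac{d}{2}-1}^0(n)\bigr]$ reproduces \eqref{B2}. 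The main obstacle is purely bookkeeping: tracking antisymmetric-index sign conventions between \eqref{momG2} and \eqref{A2B2}, and keeping in mind that $A^{(0)}$ is itself a nonlinear function of $n$, so that the Laplacian and gradient in \eqref{A2} must be read in the composed sense $A^{(0)}(n(x,t))$.
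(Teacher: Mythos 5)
Your proposal is correct and follows essentially the same route as the paper: both set up \eqref{expmuB} as a lower-triangular linear system with Jacobian computed from \eqref{momG0} (the paper writes the explicit inverse matrix, you solve by forward substitution, which is the same computation), and both feed in Lemma \ref{lemmaG2} evaluated at $(A^{(0)},u)$, with the cancellation of the $u_i\bk{\cG_2}$ terms isolating the rotational part of $B_i^{(2)}$ exactly as in the paper. No gaps.
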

\begin{proof}
What we have to do is solving the linear equation \eqref{expmuB} for the unknowns  
$A^{(2)}$ and  $B^{(2)}$, the expressions of $A^{(0)}$ and $B^{(0)}$ being given by \eqref{A0B0}.
The derivatives of $\bk{\cG_0}$ and $\bk{p_i \cG_0}$ with respect of $A$ and $B$ 
are easily obtained from Eq.\ \eqref{momG0}. 
Evaluating the resulting expressions in $A = A^{(0)}$ and $B = B^{(0)}$ we obtain
$$
\begin{pmatrix}
   \frac{\pt \bk{\cG_0}}{\pt A}
   &
   \frac{\pt \bk{\cG_0}}{\pt B_j}
\\[4pt]
   \frac{\pt \bk{p_i \cG_0}}{\pt A}
   &
   \frac{\pt \bk{p_i \cG_0}}{\pt B_j}
\end{pmatrix}_{\! (A^{(0)},B^{(0)})}
 =  \frac{n_d}{T}
 \begin{pmatrix}
   \phi_{\frac{d}{2}-1}^0  &  0
\\[4pt]
    u_i \phi_{\frac{d}{2}-1}^0  &  \delta_{ij} T n/n_d
\end{pmatrix},
$$
where $\phi_s^0 = \phi_s^0(n)$ is given by \eqref{phi0def}, and we used the fact that 
$\phi_{\frac{d}{2}}^0 = \frac{n}{n_d}$. 
The inverse matrix is easily computed to be
$$
\left[ \frac{n_d}{T}
   \begin{pmatrix}
   \phi_{\frac{d}{2}-1}^0   &   0
\\[4pt]
   u_i \phi_{\frac{d}{2}-1}^0   &  \delta_{ij} Tn/n_d
\end{pmatrix} \right]^{-1}
 =  \begin{pmatrix}
   T/(n_d\,\phi_{\frac{d}{2}-1}^0)   &  0
\\[4pt]
   - u_i/n   &  \delta_{ij} / n
\end{pmatrix}
$$
and then
$$
\begin{pmatrix}
   A^{(2)} \\ B^{(2)}_i
\end{pmatrix}
 = 
   \begin{pmatrix}
   - T/(n_d\,\phi_{\frac{d}{2}-1}^0)   &   0
\\[4pt]
   u_i/n   &  - \delta_{ij} / n
\end{pmatrix}
\begin{pmatrix}
  \bk{\cG_2} \\   \bk{p_j \cG_2}
\end{pmatrix}_{\! (A^{(0)},B^{(0)})},
$$
where $\bk{\cG_2}_{(A^{(0)},B^{(0)})}$ and $\bk{p_j \cG_2}_{(A^{(0)},B^{(0)})}$
are obtained by substituting  $A^{(0)}$ and $B^{(0)}$ for $A$ and $B$ in the 
expressions \eqref{momG2}.
This immediately yields Eqs.\ \eqref{A2B2}.
\end{proof}
By using the derivation rules
\begin{equation}
\label{derules}
  \frac{ \nabla A^{(0)}(n)}{T} = \frac{\nabla n}{n_d\,\phi^0_{\frac{d}{2}-1}(n)},
 \qquad
  \nabla \phi_s^0(n) = \frac{\phi_{s-1}^0(n)}{\phi^0_{\frac{d}{2}-1}(n)} \,\frac{\nabla n}{n_d},
\end{equation}
it is readily seen that the term between square brackets in Eq.\ \eqref{A2} can be given the more 
explicit form
\begin{multline}
\label{Qexplicit}
     \frac{2\Delta A^{(0)}}{T}\,
     \frac{\phi_{\frac{d}{2}-2}^0}{\phi_{\frac{d}{2}-1}^0}
  +   \frac{\abs{\nabla A^{(0)}}^2}{T^2} \,
   \frac{\phi_{\frac{d}{2}-3}^0}{\phi_{\frac{d}{2}-1}^0}
 =  \frac{2 \Delta n}{n_d}\,
 \frac{\phi_{\frac{d}{2}-2}^0}{(\phi_{\frac{d}{2}-1}^0)^2}
  + \frac{\abs{\nabla n}^2}{n_d^2}
  \left( \frac{\phi_{\frac{d}{2}-3}^0}{(\phi_{\frac{d}{2}-1}^0)^3}  
  - 2\frac{(\phi_{\frac{d}{2}-2}^0)^2}{(\phi_{\frac{d}{2}-1}^0)^4} \right)
\end{multline}
(where the arguments $n$ have been omitted).
As we shall see in Sec.\ \ref{S4.2}, this term can be identified as a modified Bohm potential,
since it gives the usual ``statistical'' Bohm potential in the MB limit $\lam\to 0$.
\subsection{Semiclassical fluid equations}
\label{Sec_SFE}
The semiclassical expansion of the Lagrange multipliers, found in the previous section, 
can now be substituted in Eqs. \eqref{QHDa} and \eqref{QDDa} to obtain
semiclassical hydrodynamic and drift-diffusion equations.
\par
In order to do that, let us consider the term \eqref{Qclosure} that appear in both equations 
(with $B=0$ in the diffusive case) and contains the Lagrange multipliers.
Let us rewrite it, by using the velocity variable $u = J/n$, and expand it according to \eqref{semiA}.
Taking account that $B^{(0)} = u$ (Proposition \ref{mu0}), we obtain
\begin{multline*}
  \ptx{j} (n u_i  B_j) +  n\left(u_j - B_j \right)  \frac{\pt B_j}{\pt x_i} + n \frac{\pt A}{\pt x_i}
   = \ptx{j} (n u_i  u_j) + n  \ptx{i} A^{(0)} 
\\
  + \eps^2\left(  n B_j^{(2)} R_{ij}
   + u_i \ptx{j}(n B_j^{(2)}) + n  \ptx{i} A^{(2)} \right) + \cO(\eps^4),
\end{multline*}
where we introduced the notation 
\begin{equation}
\label{Rdef}
   R_{ij} =  \frac{\pt u_i}{\pt x_j} - \frac{\pt u_j}{\pt x_i}  
\end{equation}
for the velocity curl tensor.
Now, using Eq. \eqref{derules} we can write 
$$
    \ptx{i} A^{(0)} = \frac{T}{n_d\,\phi^0_{\frac{d}{2}-1}}\,\frac{\pt n}{\pt x_i}.
$$
Moreover, from Eqs.\ \eqref{A2B2} and \eqref{Qexplicit} we obtain
\begin{multline*}
n B_j^{(2)} R_{ij} + u_i \ptx{j}(n B_j^{(2)}) + n \ptx{i} A^{(2)}   = 
\\
\frac{n_d  R_{ij} }{12 T}\, \ptx{k} 
  \left( R_{kj} \phi_{\frac{d}{2}-1}^0 \right)
  +  \frac{n}{48 T}  \ptx{i}  \left(  R_{jk} R_{jk}\, \frac{\phi_{\frac{d}{2}-2}^0 }{\phi_{\frac{d}{2}-1}^0} \right)
  +  n \ptx{i} Q(n)
\end{multline*}
where we used the identities
$$
 R_{jk} \frac{\pt u_j}{\pt x_k}  = \frac{1}{2} R_{jk} R_{jk},
 \qquad
 \frac{\pt^2}{\pt x_j \pt x_k} \left( R_{kj} \phi_{\frac{d}{2}-1}^0 \right) = 0,
$$
and introduced the notation 
\begin{equation}
\label{Qdef}
 Q(n) = -\frac{1}{24} \left[ \frac{2 \Delta n}{n_d}\,
 \frac{\phi_{\frac{d}{2}-2}^0}{(\phi_{\frac{d}{2}-1}^0)^2}
  + \frac{\abs{\nabla n}^2}{n_d^2}
  \left( \frac{\phi_{\frac{d}{2}-3}^0}{(\phi_{\frac{d}{2}-1}^0)^3}  
  - 2\frac{(\phi_{\frac{d}{2}-2}^0)^2}{(\phi_{\frac{d}{2}-1}^0)^4}\right) \right]
\end{equation}
for the modified Bohm potential.
Hence, we can state the main results of this section.
\begin{theorem}
\label{main_thm_1}
Assume that condition \eqref{main_assum} is satisfied for all times.
Then, neglecting terms of order $\cO(\eps^4)$, the isothermal quantum hydrodynamic model \eqref{QHD} 
admits the following, formal, approximation 
\begin{equation}
\label{semi_hydro}
\left\{ 
\begin{aligned}
 &\frac{\pt n}{\pt t} +  \ptx{i} (nu_i) = 0
 \\
 &\ptt (nu_i) +   \ptx{j} (n u_i  u_j) + n\frac{\pt V}{\pt x_i} 
 +  \frac{Tn}{n_d\,\phi^0_{\frac{d}{2}-1}}\frac{\pt n}{\pt x_i} 
    +  \eps^2n \frac{\pt Q}{\pt x_i}
 \\
  &\qquad \quad
  + \frac{\eps^2 n_d R_{ij} }{12 T}\, \ptx{k}  \left( R_{kj} \phi_{\frac{d}{2}-1}^0 \right)
  +  \frac{\eps^2 n}{48 T}  \ptx{i}  
  \left( R_{jk} R_{jk} \,\frac{\phi_{\frac{d}{2}-2}^0 }{\phi_{\frac{d}{2}-1}^0} \right)
  = 0,
\end{aligned}
\right.
\end{equation}
where $\phi_s^0 = \phi_s^0(n)$ is given by Eq.\ \eqref{phi0def}, $R_{ij}$ is given
by  Eq.\ \eqref{Rdef}, $Q = Q(n)$ is given by  Eq.\ \eqref{Qdef} and  $n_d = (2\pi T)^\frac{d}{2}$.
\end{theorem}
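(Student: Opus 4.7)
The proof is essentially a substitution-and-simplify argument using the results already established earlier in the section, so my plan is to organize the work that is partially displayed in the paragraph preceding the theorem statement.

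First I would start from the quantum momentum equation in \eqref{QHDa}, rewrite $J_i = n u_i$, and insert the semiclassical expansions $A = A^{(0)} + \eps^2 A^{(2)} + \cO(\eps^4)$ and $B = u + \eps^2 B^{(2)} + \cO(\eps^4)$ provided by Propositions \ref{mu0} and \ref{mu2} (using $B^{(0)} = u$). Expanding $\ptx{j}(J_i B_j) + (J_j - n B_j)\,\pt B_j/\pt x_i + n\, \pt A/\pt x_i$ to order $\eps^2$, the zeroth-order part reproduces the Euler convection term $\ptx{j}(n u_i u_j) + n\, \pt A^{(0)}/\pt x_i$, while the $\eps^2$ correction collects into
$$
 n B_j^{(2)} R_{ij} + u_i\, \ptx{j}(n B_j^{(2)}) + n\, \ptx{i} A^{(2)},
$$
exactly as written above the theorem. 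The continuity equation is immediate from \eqref{QHDa}$_1$ with $J=nu$ and involves no quantum correction.

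Next I would rewrite the classical pressure gradient $n\, \ptx{i} A^{(0)}$ using the chain rule \eqref{derules} to obtain $\frac{Tn}{n_d\,\phi^0_{d/2-1}}\,\pt n/\pt x_i$, which gives the third term of \eqref{semi_hydro}. Then I would substitute the explicit expressions \eqref{A2} and \eqref{B2} for $A^{(2)}$ and $B_i^{(2)}$ into the $\eps^2$ correction. Three simplifications are needed: (i) the purely $n$-dependent bracket of \eqref{A2}, when multiplied by $n$ and differentiated in $x_i$, is exactly the Bohm-potential gradient $n\,\ptx{i} Q(n)$ by the algebraic rewriting \eqref{Qexplicit} and the definition \eqref{Qdef}; (ii) the rotational piece of $A^{(2)}$ becomes $\frac{n}{48T}\,\ptx{i}\bigl(R_{jk}R_{jk}\,\phi^0_{d/2-2}/\phi^0_{d/2-1}\bigr)$ after using the identity $R_{jk}\,\pt u_j/\pt x_k = \tfrac12 R_{jk}R_{jk}$ already noted in the text; (iii) the $B^{(2)}$-contribution $n B_j^{(2)} R_{ij}$ produces the curl coupling $\frac{\eps^2 n_d R_{ij}}{12T}\,\ptx{k}(R_{kj}\phi^0_{d/2-1})$, while $u_i\,\ptx{j}(n B_j^{(2)})$ vanishes because, by \eqref{B2}, $n B_j^{(2)}$ is proportional to $\ptx{k}(R_{kj}\phi^0_{d/2-1})$ and hence its further divergence $\pt^2(R_{kj}\phi^0_{d/2-1})/\pt x_j \pt x_k$ vanishes by antisymmetry $R_{kj} = -R_{jk}$.

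The main obstacle is not conceptual but book-keeping: one has to verify carefully that the antisymmetric cancellation of $u_i\,\ptx{j}(n B_j^{(2)})$ really occurs and that the pure-density Bohm contribution can be packaged as a full gradient $n\,\ptx{i} Q(n)$ rather than surviving only in linearized form. The assumption \eqref{main_assum} is invoked pointwise in $(x,t)$ to guarantee that the inverse $\phi_{d/2}^{-1}(n/n_d)$ defining $A^{(0)}$ (and hence every $\phi^0_s(n)$ appearing in the equations) is well-defined, so that the formal expansions of Propositions \ref{mu0} and \ref{mu2} can be applied. Collecting the three contributions above with the classical terms and the external potential $V$ gives exactly \eqref{semi_hydro}, completing the argument up to the declared $\cO(\eps^4)$ error.
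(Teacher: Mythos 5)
Your proposal is correct and follows essentially the same route as the paper: the paper's ``proof'' of Theorem \ref{main_thm_1} is precisely the computation displayed in Subsec.\ \ref{Sec_SFE} immediately before the statement, namely expanding the closure term \eqref{Qclosure} with $B^{(0)}=u$, using \eqref{derules} for the pressure term, and reducing the $\eps^2$ correction via \eqref{A2B2}, \eqref{Qexplicit}, the identity $R_{jk}\,\pt u_j/\pt x_k = \tfrac12 R_{jk}R_{jk}$, and the antisymmetry cancellation $\pt^2(R_{kj}\phi^0_{\frac{d}{2}-1})/\pt x_j\pt x_k = 0$ that kills $u_i\,\ptx{j}(nB_j^{(2)})$. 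All three simplifications you isolate are exactly the ones the paper uses, so there is nothing to add.
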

\begin{theorem}
\label{main_thm_2}
Assume that condition \eqref{main_assum} is satisfied at all times.
Then, neglecting terms of order $\cO(\eps^4)$, the quantum diffusive model \eqref{QDD} 
admits the following, formal, approximation 
\begin{equation}
\label{semi_diffu}
  \frac{\pt n}{\pt t} =  \ptx{i} \left(  \frac{Tn}{n_d\,\phi^0_{\frac{d}{2}-1}}\frac{\pt n}{\pt x_i}
   + n\frac{\pt V}{\pt x_i} + \eps^2 n\frac{\pt Q}{\pt x_i}\right),
\end{equation}
where $\phi_s^0 = \phi_s^0(n)$ is given by Eq.\ \eqref{phi0def}, 
$Q = Q(n)$ is given by  Eq.\ \eqref{Qdef} and  $n_d = (2\pi T)^\frac{d}{2}$.
\end{theorem}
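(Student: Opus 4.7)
The proof will proceed by substituting the semiclassical expansion of the chemical potential $A$, developed in the preceding subsection, directly into the diffusive equation \eqref{QDDa}. In the diffusive case, the ansatz \eqref{hABdef} fixes $B \equiv 0$ from the outset, so the only Lagrange multiplier to expand is $A$, and the expansion reduces to $A = A^{(0)} + \eps^2 A^{(2)} + \cO(\eps^4)$ from Lemma \ref{lemmaTaylor}. The plan is to compute $A^{(0)}(n)$ and $A^{(2)}(n)$ explicitly, identify the latter with the modified Bohm potential $Q(n)$, and then collect terms.

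First, I would specialize Proposition \ref{mu0} to the diffusive setting: with $B=0$ the relation $B_i^{(0)} = u_i$ becomes vacuous and we simply recover $A^{(0)} = T\,\phi_{d/2}^{-1}(n/n_d)$. Using the derivation rule \eqref{derules}, this yields
\begin{equation*}
  \ptx{i} A^{(0)} = \frac{T}{n_d\,\phi^0_{\frac{d}{2}-1}(n)}\,\frac{\pt n}{\pt x_i}\,,
\end{equation*}
which produces the statistics-dependent diffusion coefficient appearing in \eqref{semi_diffu}. Second, I would specialize Proposition \ref{mu2} by setting $u = 0$ (since $B^{(0)} = 0$): all terms involving the velocity curl tensor vanish, leaving
\begin{equation*}
  A^{(2)} = -\frac{1}{24}\left[\frac{2\Delta A^{(0)}}{T}\,\frac{\phi^0_{\frac{d}{2}-2}}{\phi^0_{\frac{d}{2}-1}} + \frac{\abs{\nabla A^{(0)}}^2}{T^2}\,\frac{\phi^0_{\frac{d}{2}-3}}{\phi^0_{\frac{d}{2}-1}}\right].
\end{equation*}
Applying the identity \eqref{Qexplicit} (already derived in the text using \eqref{derules}) rewrites this purely in terms of $n$ and its derivatives, giving exactly $A^{(2)} = Q(n)$ as defined in \eqref{Qdef}.

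Third, I would assemble the expansion of $n\,\nabla(A+V)$:
\begin{equation*}
  n\,\ptx{i}(A+V) = \frac{T n}{n_d\,\phi^0_{\frac{d}{2}-1}}\,\frac{\pt n}{\pt x_i} + n\,\frac{\pt V}{\pt x_i} + \eps^2 n\,\frac{\pt Q}{\pt x_i} + \cO(\eps^4),
\end{equation*}
and insert this into \eqref{QDDa}, which immediately produces \eqref{semi_diffu}. The hypothesis \eqref{main_assum} is needed at all times so that $A^{(0)}(n)$, and hence $\phi_{d/2-1}^0(n)$ and all related quantities, remain well-defined and smooth along the flow.

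The potentially delicate step is the identification of $A^{(2)}$ with $Q(n)$: although formally it is just a rearrangement using the chain rule \eqref{derules}, one must handle the nested derivatives $\Delta A^{(0)}(n)$ carefully, producing both a $\Delta n$ term and an $\abs{\nabla n}^2$ term whose coefficient combines contributions from differentiating $\phi^0_{d/2-2}/\phi^0_{d/2-1}$ and from the $\abs{\nabla A^{(0)}}^2$ piece. Since this bookkeeping has already been carried out for the hydrodynamic case, no genuinely new computation is required, and Theorem \ref{main_thm_2} follows as the $B=0$ reduction of the argument supporting Theorem \ref{main_thm_1}.
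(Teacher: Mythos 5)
Your proposal is correct and follows essentially the same route as the paper: the text explicitly treats the diffusive model as the $B=0$ (hence $u=0$, $R=0$) specialization of the hydrodynamic closure, so that the expansion $n\,\nabla(A+V)=\frac{Tn}{n_d\,\phi^0_{d/2-1}}\nabla n+n\nabla V+\eps^2 n\nabla Q+\cO(\eps^4)$ inserted into \eqref{QDDa} yields \eqref{semi_diffu}. The identification $A^{(2)}=Q(n)$ via \eqref{Qexplicit} and \eqref{derules} is exactly the computation the paper performs.
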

Equations \eqref{semi_hydro} and \eqref{semi_diffu} are the generalized version of the 
semiclassical hydrodynamic and diffusive equations  derived for MB statistics
in Refs.\ \cite{DGM07,JM05} and \cite{DMR05} (see also Subsec.\ \ref{S4.2}).
Equation \eqref{semi_diffu} with $\eps = 0$ has been derived in Ref.\ \cite{JKP11} 
(where $\lam >0$ is assumed, although this does not affects the form of the equation).
A simplified version of Eq.\  \eqref{semi_hydro} has been derived in Ref. \cite{TR10}, where
the terms of order $\eps^2$ that depend on $R$ are missing.

\section{Analysis of particular regimes}
\label{S4}
In this section we investigate the form taken by the semiclassical hydrodynamic and diffusive 
equations, Eqs.\ \eqref{semi_hydro} and \eqref{semi_diffu}, in some specific physical regime. 
In particular, we shall consider the irrotational regime, the Maxwell-Boltzmann limit, 
$\lam \to 0$, and the zero temperature limit, $T\to 0$.  
Let us stress the fact that all statement and proofs are purely formal.
\subsection{The irrotational fluid}
\label{S4.1}
First of all, let us look at the form taken by the hydrodynamic equations \eqref{semi_hydro}
when initial data are irrotational ($R = 0$).
The following proposition and its proof are similar to those of an analogous result given in 
Ref.\ \cite{DGM07} for the fully-quantum equations with general convex entropy.
On the other hand, our statement can be a bit more precise because we are dealing 
with the simpler case of semiclassical approximation.
\begin{proposition}
\label{irro}
Let $(n, u)$ be a smooth solution of Eq.\ \eqref{semi_hydro} such that $B^{(2)} \in W^{1,\infty}(\mR^d)$
at all times, and assume that the fluid is initially irrotational, i.e.\ $R = 0$ at $t= 0$. 
Then the fluid remains irrotational at all times and, therefore, $n$ and $u$ satisfy
\begin{equation}
\label{semi_hydro_irro}
\left\{ 
\begin{aligned}
 &\frac{\pt n}{\pt t} +  \ptx{i} (nu_i) = 0
 \\
 &\frac{\pt u_i}{\pt t}  +  u_j \frac{\pt u_j}{\pt x_i}  + \frac{\pt V}{\pt x_i} 
 +  \frac{T}{n_d\,\phi^0_{\frac{d}{2}-1}}\frac{\pt n}{\pt x_i} 
    +  \eps^2 \frac{\pt Q}{\pt x_i}  = 0.
\end{aligned}
\right.
\end{equation}
\end{proposition}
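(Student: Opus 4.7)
The strategy is the standard one for irrotationality preservation: take the ``curl'' of the momentum equation, derive a closed evolution equation for the vorticity tensor $R_{ij}$, verify that $R\equiv 0$ is itself a solution, and conclude by uniqueness of the Cauchy problem.

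First I would pass to the velocity variable. Using the continuity equation, $\partial_t(nu_i)+\partial_j(nu_iu_j)=n(\partial_t u_i+u_j\partial_j u_i)$, so that dividing the momentum equation in \eqref{semi_hydro} by $n$ (positive by assumption) and invoking the derivation rule \eqref{derules} to recognize $\frac{T}{n_d\phi^0_{\frac{d}{2}-1}}\partial_i n = \partial_i A^{(0)}(n)$, the equation becomes
\[
  \partial_t u_i + u_j\partial_j u_i + \partial_i \Phi + \eps^2 R_{ij}B_j^{(2)} = 0,
  \qquad
  \Phi := V + A^{(0)} + \eps^2 Q + \frac{\eps^2}{48T} R_{jk}R_{jk}\,\frac{\phi^0_{\frac{d}{2}-2}}{\phi^0_{\frac{d}{2}-1}}.
\]
The key recognition is that the ``Coriolis-like'' $\eps^2$-term in \eqref{semi_hydro}, namely $\frac{\eps^2 n_d R_{ij}}{12Tn}\partial_k(R_{kj}\phi^0_{\frac{d}{2}-1})$, collapses exactly to $\eps^2 R_{ij}B_j^{(2)}$ by formula \eqref{B2}; the remaining $R_{jk}R_{jk}$-term is already a pure gradient.

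Next I would apply $\partial_k$ to the $i$-equation and subtract $\partial_i$ of the $k$-equation. The gradient $\partial_i\Phi$ drops out, the $\partial_t$ terms combine into $\partial_t R_{ik}$, and a direct manipulation of the convective term yields
\[
  \partial_k(u_j\partial_j u_i) - \partial_i(u_j\partial_j u_k) = u_j\partial_j R_{ik} + \bigl(\partial_k u_j\,\partial_j u_i - \partial_i u_j\,\partial_j u_k\bigr),
\]
where the last bracket is bilinear in $(\nabla u, R)$ and vanishes identically when $\nabla u$ is symmetric (i.e.\ when $R=0$), as one sees by decomposing $\nabla u$ into symmetric and antisymmetric parts. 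The remaining $\eps^2$-contribution, $\eps^2[\partial_k(R_{ij}B_j^{(2)}) - \partial_i(R_{kj}B_j^{(2)})]$, is linear in $R$ and $\nabla R$ with coefficients $B^{(2)}$ and $\nabla B^{(2)}$ controlled by the assumed $W^{1,\infty}$ bound. Altogether, $R_{ik}$ satisfies a quasilinear transport-type equation of the schematic form
\[
  \partial_t R_{ik} + u_j\partial_j R_{ik} + \mathcal{M}\bigl(R,\nabla R;\,n,u,B^{(2)},\nabla B^{(2)}\bigr) = 0,
\]
with $\mathcal{M}\equiv 0$ whenever $R\equiv 0$. Hence $R\equiv 0$ is a solution, and since the coefficients are Lipschitz under our smoothness assumptions, uniqueness of the Cauchy problem forces $R\equiv 0$ at all times whenever $R(\cdot,0)\equiv 0$. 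Setting $R=0$ in \eqref{semi_hydro} kills both the $R_{ij}B_j^{(2)}$ and $R_{jk}R_{jk}$ terms, and the velocity form of the momentum equation produces precisely \eqref{semi_hydro_irro}.

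The main obstacle I foresee is not any one calculation but the bookkeeping: one must verify that every surviving term in the vorticity equation carries an explicit factor of $R$ (or $\nabla R$), so that $R\equiv 0$ is really a solution and the standard transport-uniqueness argument applies. The identification $\frac{\eps^2 n_d}{12Tn}R_{ij}\partial_k(R_{kj}\phi^0_{\frac{d}{2}-1}) = \eps^2 R_{ij}B_j^{(2)}$ is exactly what makes this transparent, and the symmetric/antisymmetric splitting of $\nabla u$ is what takes care of the convective bilinear tensor.
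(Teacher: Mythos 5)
Your proof is correct and follows essentially the same route as the paper: pass to the velocity form, recognize the Coriolis-like term as $\eps^2 R_{ij}B_j^{(2)}$ so that the advecting field is $\theta=u+\eps^2 B^{(2)}$, take the curl to get a transport equation in which every term carries a factor of $R$ or $\nabla R$, and conclude from $R(\cdot,0)=0$. The only difference is that where you invoke uniqueness of the Cauchy problem for the resulting transport equation, the paper proves that uniqueness explicitly by multiplying by $R_{ik}$, integrating over $\mR^d$, and applying Gronwall with the bound $\norma{\nabla\theta}_\infty<\infty$ guaranteed by the $W^{1,\infty}$ hypothesis on $B^{(2)}$.
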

\begin{proof}
By using the identity
\begin{equation}
\label{irrid}
  \ptt (nu_i) +   \ptx{j} (n u_i  u_j) = n \left( \frac{\pt u_i}{\pt t} + R_{ij}u_j + u_j \frac{\pt u_j}{\pt x_i} \right) 
\end{equation}
(where the continuity equation, i.e.\ the first of \eqref{semi_hydro}, was used), the second equation of \eqref{semi_hydro}
can be rewritten as follows:
\begin{multline*}
  \frac{\pt u_i}{\pt t} + R_{ij}u_j  + \frac{\eps^2 n_d R_{ij}}{12 T n}\, \ptx{l} \left( R_{lj}  \phi_{\frac{d}{2}-1}^0  \right)
\\
  + \ptx{i} \left( \frac{1}{2}\abs{u}^2 + V + T\phi_{\frac{d}{2}}^{-1}\Big(\frac{n}{n_d}\Big)
  + \eps^2 Q  + \frac{\eps^2 R_{jk} R_{jk}}{48 T} \,\frac{\phi_{\frac{d}{2}-2}^0 }{\phi_{\frac{d}{2}-1}^0} \right) = 0.
\end{multline*}
Then, the following equation for $R_{ik} =  \frac{\pt u_i}{\pt x_k} - \frac{\pt u_k}{\pt x_i}$ is obtained:
$$
 \ptt R_{ik} + \ptx{k} \left(R_{ij} \theta_j \right) - \ptx{i} \left(R_{kj} \theta_j \right) = 0,
$$
where
$$
\theta_j := u_j  +  \frac{\eps^2 n_d}{12 T n}\, \ptx{l} \left( R_{lj}  \phi_{\frac{d}{2}-1}^0  \right)
= B_j^{(0)} + \eps^2 B_j^{(2)} ,
$$
and also (from a direct calculation using the definition of $R_{ij}$)
$$
 \ptt R_{ik}  + \theta_j \ptx{j} R_{ik} + R_{ij} \ptx{k} \theta_j - R_{kj} \ptx{i} \theta_j  = 0.
$$
Multiplying both sides by $R_{ik}$, summing over $i$ and $k$, and integrating over $x \in \mR^d$ yields
$$
   \frac{d}{dt} \int_{\mR^d} \abs{R}^2 dx =
   \int_{\mR^d}  \abs{R}^2 \DIV \theta\, dx + 4  \int_{\mR^d} R^2 : \nabla \theta \,dx,
$$
where $\abs{R}^2 = \sum_{i,k} R_{ik}^2$ and $R^2 : \nabla \theta = R_{ik} R_{kj} \ptx{i}\theta_j$.
From our assumptions on $u$ and $B^{(2)}$ we have that $\norma{\nabla \theta}_\infty$ is 
finite (and continuous) in time, and we can write
$$
   \frac{d}{dt} \int_{\mR^d} \abs{R}^2 dx \leq C \norma{\nabla \theta}_\infty \int_{\mR^d}  \abs{R}^2 dx
$$
for some constant $C>0$.
Since $R=0$ at $t=0$,  by the Gronwall lemma we obtain that $R=0$ at all times and then,
using again the identity \eqref{irrid}, Eq.\ \eqref{semi_hydro} reduces to Eq.\ \eqref{semi_hydro_irro}.
\end{proof}
We remark that an important class of irrotational initial data is that of pure states.
In fact, it is easy to show that the velocity field
$$
 u  = \frac{\eps}{2i \abs{\psi}^2} \left( \overline \psi  \nabla \psi
- \psi \nabla  \overline \psi \right),
$$
associated to the pure state represented by the wave function $\psi(x)$ (see Eq.\ \eqref{JDef}), has $R=0$.
Of course, Proposition \ref{irro} is not saying that an initially pure state will remain pure, but just
that it will remain irrotational. 
In Ref.\ \cite{DGM07} it is proven that a pure state remains pure in the limit $T \to 0$ and assuming MB statistics,
in which case the fully-quantum system \eqref{QHD} reduces to Madelung equations \eqref{MadEq} (that are the 
hydrodynamic form of Schr\"odinger equation \cite{Madelung26}).
We stress that is the  fully-quantum system \eqref{QHD} that possesses a limit for $T\to0$ and 
{\em not} the semiclassical equations Eq.\ \eqref{semi_hydro}, which behave singularly in this limit  for 
MB statistics.
We shall discuss this point with more details in Subsec.\ \ref{S4.3}.
\subsection{The Maxwell-Boltzmann limit}
\label{S4.2}
The MB limit of Eqs.\ \eqref{semi_hydro} and \eqref{semi_diffu} is obtained by letting $\lam \to 0$.
We remark that the parameter $\lam$ is hidden in the functions $\phi^0_s$, that are defined
by \eqref{phi0def} and \eqref{phiDef}.
Then, from property \eqref{phi_prop1} we immediately obtain 
\begin{equation}
\label{phiMBlimit}
 \lim_{\lam \to 0} \phi_s^0(n) = \frac{n}{n_d},
\end{equation}
for any $s \in \mR$.
In particular, as far as the modified Bohm potential is concerned (see Eq.\ \eqref{Qdef}),
we obtain
\begin{equation}
\label{Qlim}
 \lim_{\lam \to 0} \eps^2 Q(n) = 
 -\frac{\eps^2}{24} \left( \frac{2 \Delta n}{n} - \frac{\abs{\nabla n}^2}{n^2} \right)
 = -\frac{\eps^2}{6} \frac{\Delta \sqrt{n}}{\sqrt{n}},
\end{equation}
that is the usual (statistical) Bohm potential \cite{DMR05}.
\begin{remark}
\label{Bohm}
What is commonly termed ``Bohm potential'' \cite{Bohm52a,Bohm52b} is the quantum potential appearing in 
Madelung equations \eqref{MadEq}, namely
$$
  V_B(n) = -\frac{\eps^2}{2} \frac{\Delta \sqrt{n}}{\sqrt{n}}.
$$
This is a ``pure-state'' Bohm potential, which differs for a factor $1/3$ from what we termed 
``statistical'' Bohm potential. i.e.\ \eqref{Qlim}.
The latter arises naturally from the quantum entropy principle. 
Which form the Bohm potential should have in quantum fluid equations
is a long-standing debate, see e.g.\  Ref.\ \cite{FerryZhou93} and references therein.
\end{remark}
As far as the rotational terms are concerned, i.e.\ the terms of Eq.\ \eqref{semi_hydro}
that depend on the velocity curl tensor $R$, we obtain
\begin{multline*}
  \lim_{\lam \to 0} \left[ \frac{\eps^2 n_d R_{ij} }{12 T}\, \ptx{k}  \left( R_{kj} \phi_{\frac{d}{2}-1}^0 \right)
  +  \frac{\eps^2 n}{48 T}  \ptx{i}  
  \left( R_{jk} R_{jk} \,\frac{\phi_{\frac{d}{2}-2}^0 }{\phi_{\frac{d}{2}-1}^0} \right) \right]
 \\[3pt]
  =  \frac{\eps^2 R_{ij} }{12 T}\, \ptx{k}  \left( R_{kj} n\right)
  +  \frac{\eps^2n}{48 T}  \ptx{i}    \left( R_{jk} R_{jk} \right).
\end{multline*}
The last expression can be simplified by considering the identity
$$
  \ptx{k} \left(nR_{ij}R_{kj} \right) = R_{ij} \ptx{k} \left(nR_{kj} \right)
  + \frac{n}{4} \ptx{i}\left( R_{jk} R_{jk} \right) 
$$
(recall that we sum over the repeated indices $j$ and $k$), so that
$$
  \frac{\eps^2 R_{ij} }{12 T}\, \ptx{k}  \left( R_{kj} n\right)
  +  \frac{\eps^2n}{48 T}  \ptx{i}    \left( R_{jk} R_{jk} \right)
  = \frac{\eps^2}{12 T} \ptx{k} \left(nR_{ij}R_{kj} \right),
$$
where $\ptx{k} \left(nR_{ij}R_{kj} \right)$ is the expression in components
of $\DIV(n R R^T)$
(this is the form in which the rotational terms are written in Ref.\ \cite{JM05}).
All this considered we can state the following.
\begin{proposition}
In the Maxwell-Boltzmann limit, $\lam \to 0$, the semiclassical hydrodynamic and diffusive
equations \eqref{semi_hydro} and \eqref{semi_diffu} take, respectively,  the  form
\begin{equation}
\label{MB_hydro}
\left\{ 
\begin{aligned}
 &\frac{\pt n}{\pt t} +  \ptx{i} (nu_i) = 0
 \\[4pt]
 &\ptt (nu_i) +   \ptx{j} (n u_i  u_j) + n\frac{\pt V}{\pt x_i} 
 +  T \frac{\pt n}{\pt x_i} 
\\
  &\qquad \qquad
   - \frac{\eps^2}{6} n \ptx{i} \frac{\Delta \sqrt{n}}{\sqrt{n}}
   + \frac{\eps^2}{12 T} \ptx{k} \left(nR_{ij}R_{kj} \right) = 0,
\end{aligned}
\right.
\end{equation}
and
\begin{equation}
\label{MB_diffu}
  \frac{\pt n}{\pt t} =  \ptx{i} \left(  T \frac{\pt n}{\pt x_i}
   + n\frac{\pt V}{\pt x_i} - \frac{\eps^2}{6} n \ptx{i} \frac{\Delta \sqrt{n}}{\sqrt{n}} \right).
\end{equation}
\end{proposition}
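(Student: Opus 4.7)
The plan is to take the $\lambda\to 0$ limit of each term appearing in Eqs.\ \eqref{semi_hydro} and \eqref{semi_diffu} separately, using the single fact that $\phi_s^0(n)\to n/n_d$ for every $s\in\mR$, as recorded in \eqref{phiMBlimit}. This limit is a consequence of property \eqref{phi_prop1} applied inside the definition \eqref{phi0def} (note that the inverse $\phi_{d/2}^{-1}$ also degenerates consistently to the logarithm, so that the composition behaves as claimed). Once this is accepted, everything reduces to algebraic substitutions.

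First I would treat the classical pressure contribution: in the term $Tn/(n_d\phi^0_{d/2-1})\cdot\partial n/\partial x_i$, the factor $\phi^0_{d/2-1}$ tends to $n/n_d$, giving exactly $T\,\partial n/\partial x_i$. Next I would handle the Bohm potential $\eps^2 Q(n)$ by substituting $\phi^0_s\to n/n_d$ into \eqref{Qdef}; a short calculation (already carried out in \eqref{Qlim}) turns the bracketed expression into $2\Delta n/n - |\nabla n|^2/n^2$, which equals $4\,\Delta\sqrt n/\sqrt n$, yielding the statistical Bohm potential $-\eps^2/6\cdot\Delta\sqrt n/\sqrt n$.

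The rotational terms require slightly more care. Substituting $\phi^0_{d/2-1}\to n/n_d$ and $\phi^0_{d/2-2}/\phi^0_{d/2-1}\to 1$, the two curl-dependent terms in the second equation of \eqref{semi_hydro} become
\begin{equation*}
\frac{\eps^2 R_{ij}}{12T}\,\frac{\pt}{\pt x_k}(nR_{kj}) \;+\; \frac{\eps^2 n}{48T}\,\frac{\pt}{\pt x_i}(R_{jk}R_{jk}).
\end{equation*}
The algebraic identity $\pt_{x_k}(nR_{ij}R_{kj}) = R_{ij}\pt_{x_k}(nR_{kj}) + (n/4)\pt_{x_i}(R_{jk}R_{jk})$, recalled just above the statement, collapses this sum into $(\eps^2/12T)\,\pt_{x_k}(nR_{ij}R_{kj})$. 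Inserting these three limits into \eqref{semi_hydro} produces \eqref{MB_hydro}, and the diffusive equation \eqref{MB_diffu} follows from exactly the same substitutions in \eqref{semi_diffu}, where the rotational contribution is absent from the start.

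The only nontrivial step is verifying the algebraic identity for the rotational terms, but this is a direct Leibniz computation using the definition \eqref{Rdef} of $R_{ij}$ (after using $R_{jk}\pt_{x_k}u_j=\tfrac12 R_{jk}R_{jk}$, which was already used in deriving \eqref{semi_hydro}). Everything else is a matter of organizing the substitutions, so the proposition should follow cleanly from \eqref{phiMBlimit} and the explicit formulas \eqref{Qdef}, \eqref{semi_hydro}, \eqref{semi_diffu}.
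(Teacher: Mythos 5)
Your proposal is correct and follows essentially the same route as the paper: the paper's own justification is precisely the discussion preceding the proposition in Subsec.\ \ref{S4.2}, namely substituting the limit \eqref{phiMBlimit} into the pressure term and into $Q(n)$ (yielding \eqref{Qlim}), and then collapsing the two curl-dependent terms via the identity $\ptx{k}(nR_{ij}R_{kj}) = R_{ij}\ptx{k}(nR_{kj}) + \frac{n}{4}\ptx{i}(R_{jk}R_{jk})$. Your extra remark that the inverse $\phi_{d/2}^{-1}$ degenerates to the logarithm so that the composition in \eqref{phi0def} behaves consistently is a welcome (if minor) clarification that the paper leaves implicit.
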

The isothermal hydrodynamic equations \eqref{MB_hydro} were first derived in Refs.\ \cite{JM05} 
and \cite{DGM07}.
In the latter, the rotational term is expressed in the equivalent%
\footnote{Actually, in Ref.\ \cite{DGM07} the factor $1/T$ seems to be missing.}
form (in the three-dimensional case)
$$
  \frac{\eps^2}{12 T} \DIV \left(n R R^T \right) = 
  \frac{\eps^2}{12 T} \,\omega \times \left( \nabla \times (n\omega) \right)
  + \frac{\eps^2}{24 T}\, n \nabla \abs{\omega}^2,
$$ 
where $\omega = \nabla \times u$.
The diffusive equation \eqref{MB_diffu} was first derived in Ref.\ \cite{DMR05}.
\par
We finally remark that, as it can be easily deduced from \eqref{phi_prop1.2}, the MB limit can be equivalently 
obtained by fixing $\lam \not= 0$ and letting $T \to +\infty$.
\subsection{The zero-temperature limit}
\label{S4.3}
The behavior of Eqs.\ \eqref{semi_hydro} and \eqref{semi_diffu} in the limit $T\to 0$  depends dramatically 
on the sign of $\lam$. 
For this reason we divide the analysis of such limit in the three reference cases $\lam = 1$ (FD),  
$\lam = 0$ (MB) and $\lam = -1$ (BE).
\par
\smallskip
In view of the following discussion, it is convenient to extend property \eqref{phi_prop2} to
all positive $\lam$.
For $z \in \mR$ and $s \in \mR$, let us denote by $F_s(z) := - \Li_s(-\e^z)$ 
the function $\phi_s(z)$ for $\lam = 1$.
Then, assuming $\lam > 0$, from property \eqref{phi_prop2} we obtain
\begin{equation}
\label{prop2new}
\phi_s(z)  = 
\frac{1}{\lam} F_s(z + \log\lam) \sim \frac{(z+ \log\lam)^s}{\lam \, \Gamma(s+1)},
\quad 
\text{as $z \to +\infty$},
\end{equation}
where $s \not= -1, -2, \ldots$, and $f \sim g$ means $f/g \to 1$.
Then, recalling that $n_d = (2\pi T)^\frac{d}{2}$, we have
\begin{equation}
\label{invapprox}
\phi_{\frac{d}{2}}^{-1}\left(\frac{n}{n_d} \right)  \sim 
\left(\frac{\lam\, \Gamma\big(\frac{d}{2} + 1\big) n}{n_d} \right)^\frac{2}{d} - \log \lam,
\quad 
\text{as $T \to 0$}.
\end{equation}
Hence, recalling definition \eqref{phi0def} and combining \eqref{prop2new} with \eqref{invapprox}
we obtain
\begin{equation}
\label{phi0approx}
\phi^0_s(n) \sim 
\frac{\lam^{\frac{2s}{d}-1}}{\Gamma(s+1)}
\left(\frac{\Gamma\big(\frac{d}{2} + 1\big) n}{n_d} \right)^\frac{2s}{d},
\quad 
\text{as $T \to 0$},
\end{equation}
which holds for $\lam > 0$ and $s \not= -1, -2, \ldots$.
This formula can be used to obtain the formal asymptotics for $T \to 0$ 
of the various  temperature-dependent terms
that appear in Eqs.\ \eqref{semi_hydro} and \eqref{semi_diffu}. 
In particular, we have
\begin{subequations}
\label{Tlimits}
\begin{align}
\label{Tlimits1}
   &\frac{T}{n_d\,\phi^0_{\frac{d}{2}-1}} \to
   \frac{\lam^\frac{2}{d} \left(\frac{d}{2}\right)^{\frac{2-d}{d}} 
   \Gamma\big(\frac{d}{2}\big)^\frac{2}{d}\, n^\frac{2-d}{d}}{2\pi},
\\[4pt]
\label{Tlimits2}
  &\frac{1}{n_d}\frac{\phi_{\frac{d}{2}-2}^0}{(\phi_{\frac{d}{2}-1}^0)^2}
    \to \frac{(d-2)}{d} \,\frac{1}{n}
\\[4pt]
\label{Tlimits3}
  &\frac{1}{n_d^2}\frac{\phi_{\frac{d}{2}-3}^0}{(\phi_{\frac{d}{2}-1}^0)^3}
    \to \frac{(d-4)(d-2)}{d^2} \,\frac{1}{n^2}
\\[4pt]
\label{Tlimits5}
  &\frac{1}{T}\,\frac{\phi_{\frac{d}{2}-2}^0}{\phi_{\frac{d}{2}-1}^0}
    \to  \frac{d-2}{2}\, \frac { 2\pi \,n^{-\frac{2}{d}} }
   { \lam^\frac{2}{d}\, \Gamma\big(\frac{d}{2}+1\big)^\frac{2}{d} },
\end{align}
\end{subequations}
as $T\to 0$ (note that the right-hand sides do not depend on $T$).
From  \eqref{Tlimits2}, \eqref{Tlimits3} and \eqref{Qdef} we get
the interesting limit
\begin{equation}
\label{Tlimits6}
  Q(n) \to  - \frac{1}{6}\, \frac{(d-2)}{d} \,  \frac{\Delta \sqrt{n}}{\sqrt{n}},
  \quad 
\text{as $T \to 0$},
\end{equation}
which is independent on $\lam$ too.
\subsubsection{FD case}
Let us first of all consider the limit $T\to 0$ of Eqs.\ \eqref{semi_hydro} and \eqref{semi_diffu}
assuming  FD statistics. 
This is the richest case since, as Eqs.\ \eqref{Tlimits} and \eqref{Tlimits6} show, for $\lam > 0$
the behavior of the semiclassical fluid equations is regular as temperature goes to 0
(this is not the case for MB and BE statistics, as we shall see next).
Then, it is enough to set $\lam = 1$ in Eqs.\  \eqref{Tlimits} and \eqref{Tlimits6} to obtain the following.
\begin{proposition}
Let $\lam = 1$. Then, in the limit $T \to 0$ (also known as completely degenerate limit)
the semiclassical hydrodynamic and diffusive
equations \eqref{semi_hydro} and \eqref{semi_diffu} take, respectively,  the  form
\begin{equation}
\label{T0_FD_hydro}
\left\{ 
\begin{aligned}
 &\frac{\pt n}{\pt t} +  \ptx{i} (nu_i) = 0
 \\[4pt]
 &\ptt (nu_i) +   \ptx{j} (n u_i  u_j) + n\frac{\pt V}{\pt x_i} 
 + \gamma_1 \ptx{i} n^\frac{2+d}{d}     
 - \eps^2 \gamma_2\, n\ptx{i} \frac{\Delta \sqrt{n}}{\sqrt{n}}
\\[2pt]
  &\qquad \qquad 
        + \eps^2 \gamma_3\, R_{ij}  \ptx{k}  \left( R_{kj} n^\frac{d-2}{d} \right)
  +  \eps^2 \gamma_4\, n \ptx{i} \frac{R_{jk} R_{jk}}{n^\frac{2}{d} }
 = 0,
\end{aligned}
\right.
\end{equation}
and
\begin{equation}
\label{T0_FD_diffu}
  \frac{\pt n}{\pt t} =  \ptx{i} \left(  \gamma_1 \ptx{i} n^\frac{2+d}{d} 
   + n\frac{\pt V}{\pt x_i} - \eps^2 \gamma_2\, n \ptx{i} \frac{\Delta \sqrt{n}}{\sqrt{n}} \right),
\end{equation}
where
\begin{equation*}
\begin{aligned}
&\gamma_1 = \frac{1}{2\pi}\, \frac{d}{d+2}\, \Big(\frac{d}{2} \Big)^\frac{2-d}{d} 
   \Gamma\Big(\frac{d}{2}\Big)^\frac{2}{d}, 
&\quad &\gamma_2 = \frac{d-2}{6 d},
\\[6pt]
&\gamma_3 =  \frac{d \pi}{12\,\Gamma\big(\frac{d}{2}+ 1\big)^{\frac{2}{d} }}\, ,
&  &\gamma_4 =  \frac{(d-2) \pi}{48\,\Gamma\big(\frac{d}{2} + 1\big)^{\frac{2}{d} }} .
\end{aligned}
\end{equation*}
\end{proposition}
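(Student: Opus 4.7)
The plan is to verify the $T\to 0$ limit term by term: there are only four $T$-dependent coefficients appearing in \eqref{semi_hydro} (and three in \eqref{semi_diffu}), and for each I want to substitute the asymptotic formula for $\phi^0_s(n)$ as $T\to 0$ and match the constant against the stated $\gamma_i$. The asymptotics \eqref{Tlimits1}--\eqref{Tlimits5} and \eqref{Tlimits6} derived just before the proposition do most of the work; the proof is essentially bookkeeping. Since all listed limits are independent of $\lam$ (for $\lam>0$) once specialized by \eqref{phi0approx}, setting $\lam=1$ simply picks off the FD constants.

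Concretely, I would proceed as follows. For the classical-pressure-like term, apply \eqref{Tlimits1} with $\lam=1$ and use the identity $n^{(2-d)/d}\,\ptx{i} n=\frac{d}{d+2}\,\ptx{i} n^{(d+2)/d}$, yielding $\gamma_1 = \frac{1}{2\pi}\frac{d}{d+2}(d/2)^{(2-d)/d}\Gamma(d/2)^{2/d}$. For the Bohm-type term, apply \eqref{Tlimits6} directly, giving $\gamma_2=(d-2)/(6d)$. For the first rotational term, the coefficient is $\frac{n_d}{T}\phi^0_{d/2-1}$; plug in \eqref{phi0approx} with $s=d/2-1$ and use the two identities $n_d^{2/d}=2\pi T$ and $\Gamma(d/2+1)=(d/2)\Gamma(d/2)$. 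A short computation collapses the $T$'s and produces $\gamma_3=d\pi/\bigl(12\,\Gamma(d/2+1)^{2/d}\bigr)$. For the second rotational term, apply \eqref{Tlimits5} with $\lam=1$ to obtain $\gamma_4=(d-2)\pi/\bigl(48\,\Gamma(d/2+1)^{2/d}\bigr)$. The diffusive equation \eqref{T0_FD_diffu} follows from the same computations for its three terms (the pressure, potential, and Bohm terms), which are shared with the hydrodynamic case.

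The only real obstacle is bookkeeping: making sure the exponents of $\Gamma(d/2)$, $d/2$, and $n$ cancel correctly when turning $\phi^0_{d/2-1}$ into $n^{(d-2)/d}$. The critical identity is $n_d^{2/d}=2\pi T$, which is what makes the factors of $T$ disappear so that a finite, $T$-independent constant survives in the limit. Once this cancellation is understood, each of the four matchings is a two-line verification, and no further analytic input is needed beyond the limits \eqref{Tlimits} and \eqref{Tlimits6} already established.
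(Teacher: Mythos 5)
Your proposal is correct and follows exactly the paper's route: the paper's entire proof is the remark that it suffices to set $\lam=1$ in the asymptotics \eqref{Tlimits} and \eqref{Tlimits6}, which is precisely the bookkeeping you carry out (and your verification of $\gamma_3$ directly from \eqref{phi0approx}, using $n_d^{2/d}=2\pi T$ and $\Gamma(\frac{d}{2}+1)=\frac{d}{2}\Gamma(\frac{d}{2})$, is the computation the paper leaves implicit). One trivial slip: the identity you quote for the pressure term should read $n\cdot n^{(2-d)/d}\,\ptx{i}n=\frac{d}{d+2}\,\ptx{i}n^{(d+2)/d}$ --- the extra factor $n$ comes from the term $\frac{Tn}{n_d\phi^0_{d/2-1}}\ptx{i}n$ in \eqref{semi_hydro} --- but your final value of $\gamma_1$ is the correct one, so this is only a typo.
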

We remark the particularly simple form of the limit Bohm potential: it is just the (statistical) Bohm potential
multiplied by the factor $\frac{d-2}{d}$. 
Noticeably, it vanishes for $d=2$ and changes sign for $d=1$. Note that for $d=2$ also the coefficient $\gamma_4$
vanishes.
\par
Equations \eqref{T0_FD_hydro}  and  \eqref{T0_FD_diffu} with $R=0$ and $d=3$ have been obtained 
in Ref.\ \cite{TR10} (where also  ``weakly degenerate'' and ``strongly degenerate'' limits are considered). 
Equation  \eqref{T0_FD_diffu} with $\eps = 0$ and $d=3$ has been obtained in Ref.\ \cite{JKP11}
(where also energy-transport equations are considered, which however reduce to the diffusive equation 
for $T\to 0$).
Further references to degenerate fluid models can be found in Refs.\ \cite{JKP11,TR10,TR11}.
\subsubsection{MB case}
Maxwell-Boltzmann stastistics, for the $T\to 0$ limit, is a very singular case.
Indeed, looking at Eqs. \eqref{phiMBlimit} and  \eqref{phi0approx}, we notice that, at fixed $n$,
the function $\phi^0_{\frac{d}{2}-k}(n)$ (where $k = 1,2,3$ are the relevant cases) behaves like $T^{-\frac{d}{2}}$
for $\lam \to 0$ and  $T > 0$, and behaves like $\lam^{-\frac{2k}{d}} T^{k - \frac{d}{2}}$ for $T \to 0$ and
$\lam > 0$. 
Then, the two limits $\lam \to 0$ and $T\to 0$ are somehow incompatible, and the corresponding 
behavior of the fluid equations depends on how the point $(0,0)$ is approached in the parameter 
space $(\lam, T)$.
\par
Let us consider the two paths: $\lam \to 0$ followed by $T \to 0$,
and  $T \to 0$ followed $\lam \to 0$.
The first path corresponds  to starting from 
the MB equations \eqref{MB_hydro} and  \eqref{MB_diffu}, and then letting $T \to 0$;
the second path corresponds to using first the asymptotic identities 
\eqref{Tlimits} and \eqref{Tlimits6} in Eqs.\ \eqref{semi_hydro} and \eqref{semi_diffu}, 
and then letting $\lam \to 0$.
In both cases the rotational terms are singular and, therefore, the limit is only
compatible with irrotational solutions (see Proposition \ref{irro}).
Moreover, in both cases, the diffusive term $\frac{Tn}{n_d\,\phi^0_{d/2-1}}\frac{\pt n}{\pt x_i}$ 
vanishes asymptotically.
Thus, assuming $R=0$ (see Eq.\ \eqref{semi_hydro_irro}), we obtain from both paths equations of the form
\begin{equation}
\label{T0MB_hydro}
\left\{ 
\begin{aligned}
 &\frac{\pt n}{\pt t} +  \ptx{i} (nu_i) = 0
 \\[4pt]
 &\frac{\pt u_i}{\pt t}  +  u_j \frac{\pt u_j}{\pt x_i} + \frac{\pt V}{\pt x_i} 
   - \gamma\, \frac{\eps^2}{6}  \ptx{i} \frac{\Delta \sqrt{n}}{\sqrt{n}}
= 0,
\end{aligned}
\right.
\end{equation}
and 
\begin{equation}
\label{T0MB_diffu}
  \frac{\pt n}{\pt t} =  \ptx{i} \left(  n\frac{\pt V}{\pt x_i} -  \gamma\, \frac{\eps^2}{6} n \ptx{i} \frac{\Delta \sqrt{n}}{\sqrt{n}} \right),
\end{equation}
but the coefficient $\gamma$ changes: it is 1 for the first path and $\frac{d-2}{d}$ for the second path.
 \par
As already mentioned, the correct point of view is probably that of Ref.\ \cite{DGM07},
where the $T\to 0$ limit for MB statistics is discussed for the fully-quantum hydrodynamic
equations (here represented by Eqs.\ \eqref{QHD}) and it is proven that such limit yields the
Madelung equations:
\begin{equation}
\label{MadEq}
\left\{ 
\begin{aligned}
 &\frac{\pt n}{\pt t} +  \ptx{i} (nu_i) = 0
 \\[4pt]
 &\frac{\pt u_i}{\pt t}  +  u_j \frac{\pt u_j}{\pt x_i}  + \frac{\pt V}{\pt x_i} 
   - \frac{\eps^2}{2} \ptx{i} \frac{\Delta \sqrt{n}}{\sqrt{n}} = 0.
\end{aligned}
\right.
\end{equation}
Equations \eqref{MadEq} were first derived by E.\ Madelung \cite{Madelung26}, 
and can be easily obtained from  Schr\"odinger equation by writing the wave function 
as $\psi = \sqrt{n} \,\e^{iS/\eps}$ and then putting $u = \nabla S$.
\subsubsection{BE case}
The discussion of the $T\to 0$ limit for Bose-Einstein statistics,  $\lam = -1$, is strongly dimension-dependent 
and we shall examine three cases, $d\geq3$, $d=2$ and $d=1$, separately.
\par
For $d\geq 3$, the condition \eqref{main_assum} is never satisfied in the limit of vanishing temperature
(physically speaking, all particles will be in the condensate phase)
and then such limit is a nonsense in our framework, because we are only considering
a completely non-condensate fluid (see Remark \ref{rem_condens}).
As it is well known, the mathematical description of the bose-Einstein condensate 
should be given in terms of a  nonlinear Schr\"odinger equation \cite{DalfovoEtAl99}.
\par
For $d=2$, the condensation does not occur and we can let $T$ go to 0 in the semiclassical
equations \eqref{semi_hydro} and \eqref{semi_diffu}.
As usual, we have to examine the asymptotic behavior of the functions $\phi_s^0(n)$.
This requires the inversion of $\phi_\frac{d}{2}(z)$ which, for $\lam = -1$ and $d = 2$,
is given by \cite{Lewin81}
$$
  \phi_1(z) = \Li_1(\e^z) = - \log(1 - \e^z).
$$
Recalling that $n_2 = 2\pi T$, we have
$$
   \phi_1^{-1}\left(\frac{n}{2\pi T}\right) = \log \left(1 - \e^{-\frac{n}{2\pi T}} \right)
   \sim  - \e^{-\frac{n}{2\pi T}}, 
   \quad
   \text{as $T \to 0$}.
$$
Now, since the relevant values of $s$ for the present case are $s = 0$, $s = -1$ and $s = -2$  
(i.e.\ $\frac{d}{2} -1$, $\frac{d}{2} -2$ and $\frac{d}{2} -3$), we can use \eqref{phi_prop2.2}
and conclude that
\begin{equation}
\label{phi0limBE2}
  \phi_s^0(n) \sim \Gamma(1-s)\,\e^\frac{(1-s)n}{2\pi T},
     \quad
   \text{as $T \to 0$}.
\end{equation}
By using \eqref{phi0limBE2} in Eqs.\ \eqref{semi_hydro} and \eqref{semi_diffu} it is readily seen that 
in order to obtain a finite limit we have to assume $R = 0$ and to rescale the density as
\begin{equation}
\label{resdens}
  \tilde n = \frac{n}{2\pi T},
\end{equation}
Then, it is not difficult to prove the following.
\begin{proposition}
Let $\lam = -1$ and $d = 2$, and assume $R=0$. 
Then, in the limit $T \to 0$, from the semiclassical hydrodynamic and diffusive
equations \eqref{semi_hydro} and \eqref{semi_diffu}
we formally obtain, respectively, the equations
\begin{equation}
\label{T0_BE_hydro2}
\left\{ 
\begin{aligned}
 &\frac{\pt \tilde n}{\pt t} +  \ptx{i} (\tilde n u_i) = 0
 \\[4pt]
 &\frac{\pt u_i}{\pt t}  +  u_j \frac{\pt u_j}{\pt x_i}   + \frac{\pt V}{\pt x_i} 
 - \frac{\eps^2}{12} \ptx{i} \Delta \tilde n = 0,
\end{aligned}
\right.
\end{equation}
and
\begin{equation}
\label{T0_BE_diffu2}
  \frac{\pt \tilde n}{\pt t} =  
  \ptx{i} \left( \tilde  n\frac{\pt V}{\pt x_i} - \frac{\eps^2}{12} \tilde  n \ptx{i} \Delta \tilde n \right)
\end{equation}
for the rescaled density \eqref{resdens}.
\end{proposition}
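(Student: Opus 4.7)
The proof is a direct asymptotic computation, specializing the semiclassical equations \eqref{semi_hydro} and \eqref{semi_diffu} to $d=2$, $\lambda=-1$, $R=0$ and feeding in the asymptotic identity \eqref{phi0limBE2}. My plan is to (i) reduce the equations to their irrotational form via Proposition \ref{irro}, (ii) compute the $T\to 0$ limits of every coefficient that appears, using only $s\in\{0,-1,-2\}$ in \eqref{phi0limBE2}, and (iii) write the resulting equations in terms of $\tilde n = n/(2\pi T)$.

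First I would set $n_2 = 2\pi T$, so that $n = 2\pi T\,\tilde n$ and the continuity equation becomes, after division by $2\pi T$, the continuity equation for $\tilde n$. For the momentum equation I would use that when $R=0$ the terms involving $R_{ij}$ drop out identically, and that the left-hand side $\partial_t(nu_i)+\partial_j(nu_iu_j)$ equals $n(\partial_t u_i + u_j\partial_j u_i)$ thanks to the continuity equation. Dividing by $n$, what remains to analyse are the two scalar coefficients
\begin{equation*}
\frac{T}{n_d\,\phi^0_0(n)}\,\partial_i n
\qquad\text{and}\qquad
\varepsilon^2\,\partial_i Q(n).
\end{equation*}

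Using \eqref{phi0limBE2} with $s=0,-1,-2$, which gives respectively $\phi^0_0\sim e^{\tilde n}$, $\phi^0_{-1}\sim e^{2\tilde n}$ and $\phi^0_{-2}\sim 2e^{3\tilde n}$, the first coefficient becomes $\frac{T\partial_i\tilde n}{e^{\tilde n}}$ and vanishes as $T\to 0$. For the second, I would substitute these three asymptotics into the definition \eqref{Qdef} of $Q$; the crucial point is the cancellation in the $|\nabla n|^2$ bracket, namely
\begin{equation*}
\frac{\phi^0_{-2}}{(\phi^0_0)^3} - 2\,\frac{(\phi^0_{-1})^2}{(\phi^0_0)^4}\;\sim\;2 - 2 \;=\;0,
\end{equation*}
which kills the nonlinear gradient contribution in the limit and leaves only $Q(n)\sim -\frac{\Delta n}{24\pi T} = -\frac{\Delta \tilde n}{12}$. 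Inserting these limits into the momentum equation yields \eqref{T0_BE_hydro2}, and the same two identities plugged into \eqref{semi_diffu}, after dividing by $2\pi T$, yield \eqref{T0_BE_diffu2}.

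The main obstacle is not the algebra itself but the bookkeeping of the rescaling: since $n=2\pi T\,\tilde n$ collapses to $0$ while $\tilde n$ is held fixed, each occurrence of $n$, $\nabla n$ and $\Delta n$ has to be re-expressed in terms of $\tilde n$ before passing to the limit, and one has to verify that the $T$-powers match so that all retained terms have a finite, nontrivial limit while the purely thermal diffusion term drops out. The cancellation identity above is the one place where the specific structure of the Bose--Einstein case in dimension two is used essentially; without it the $|\nabla\tilde n|^2/\tilde n$ contribution would survive and spoil the clean form of \eqref{T0_BE_hydro2}--\eqref{T0_BE_diffu2}.
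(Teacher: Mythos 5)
Your computation is correct and follows exactly the route the paper intends (the paper only sets up the asymptotics \eqref{phi0limBE2} and leaves the verification to the reader): specialize to $s=0,-1,-2$, note the vanishing of the thermal pressure term, and exploit the cancellation $\Gamma(3)-2\,\Gamma(2)^2=0$ in the gradient part of $Q$, which is indeed the essential two-dimensional Bose--Einstein feature. The bookkeeping with $n=2\pi T\,\tilde n$ is handled correctly in both the hydrodynamic and diffusive cases.
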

Note that in this case 
we have found a ``degenerate'' form of the limit Bohm potential, which reduces to a 
Laplacian.
\par
Let us finally examine the case $d=1$, which does not admit condensation as well.
From \eqref{phi_prop2.2} we have
$$
  \phi_\frac{1}{2}(z) \sim \frac{\sqrt{\pi}}{\sqrt{-z}},
  \quad
   \text{as $z \to 0^-$},
$$
and then, recalling that $n_1 = \sqrt{2\pi T}$, we can write
$$
  \phi_\frac{1}{2}^{-1}\left(\frac{n}{\sqrt{2\pi T}}\right) \sim - \frac{\pi}{n^2},
  \quad
   \text{as $T \to 0$}.
$$
We obtain, therefore, 
\begin{equation}
\label{phi0limBE1}
  \phi_s^0(n) \sim \Gamma(1-s)\left( \frac{n^2}{2 \pi^2 T} \right)^{1-s}, 
  \quad
  \text{as $T \to 0$},
\end{equation}
where now the relevant values are $s = -\frac{1}{2}$,  $s = -\frac{3}{2}$ and $s = -\frac{5}{2}$. 
By using \eqref{phi0limBE1}, and recalling that $R=0$ in the present one-dimensional case,
we see that the limit behavior of Eqs.\ \eqref{semi_hydro} and \eqref{semi_diffu} 
is non-singular, and the following is readily proven. 
\begin{proposition}
Let $\lam = -1$ and $d = 1$. 
Then, in the limit $T \to 0$, the semiclassical hydrodynamic and diffusive
equations \eqref{semi_hydro} and \eqref{semi_diffu} take, respectively, the form
\begin{equation}
\label{T0_BE_hydro1}
\left\{ 
\begin{aligned}
 &\frac{\pt n}{\pt t} +  \ptx{} (n u) = 0
 \\[4pt]
 &\frac{\pt u}{\pt t} +  u \frac{\pt u}{\pt x} +  \frac{\pt V}{\pt x} 
 - \frac{\eps^2 }{2} \ptx{} \left( \frac{1}{\sqrt{n}} \frac{\pt^2 \sqrt{n}}{\pt x^2} \right) = 0,
\end{aligned}
\right.
\end{equation}
and
\begin{equation}
\label{T0_BE_diffu1}
  \frac{\pt n}{\pt t} =  
  \ptx{} \left( n\frac{\pt V}{\pt x} 
  - \frac{\eps^2}{2} n \ptx{} \left( \frac{1}{\sqrt{n}} \frac{\pt^2 \sqrt{n}}{\pt x^2} \right) \right).
\end{equation}
\end{proposition}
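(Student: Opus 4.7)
The plan is to substitute the one-dimensional, $\lam=-1$ asymptotics \eqref{phi0limBE1} into the semiclassical equations \eqref{semi_hydro} and \eqref{semi_diffu} and verify that every singular prefactor either vanishes or conspires with the others into a finite limit. First I would observe that in $d=1$ the velocity curl tensor $R_{ij}$ is identically zero, so the two rotational $\cO(\eps^2)$ terms in \eqref{semi_hydro} drop out automatically and no separate irrotationality assumption is required. The three relevant indices in \eqref{phi0limBE1} are then $s=-1/2,\,-3/2,\,-5/2$ (that is $\frac{d}{2}-1,\,\frac{d}{2}-2,\,\frac{d}{2}-3$), with gamma prefactors $\Gamma(3/2)=\sqrt{\pi}/2$, $\Gamma(5/2)=3\sqrt{\pi}/4$ and $\Gamma(7/2)=15\sqrt{\pi}/8$.

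Setting $X:=n^2/(2\pi^2T)$ so that $\phi^0_s(n)\sim\Gamma(1-s)\,X^{1-s}$, and using $n_1=\sqrt{2\pi T}$, a direct computation gives
\begin{equation*}
 \frac{Tn}{n_1\,\phi^0_{-1/2}(n)}\;\sim\;\frac{4\pi^2T^2}{n^2}\;\longrightarrow\;0,
\end{equation*}
so the isothermal pressure term in \eqref{semi_hydro} and the corresponding diffusive flux $\frac{Tn}{n_1\,\phi^0_{-1/2}}\,n_x$ in \eqref{semi_diffu} both vanish as $T\to 0$.

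The crux is the limit of the modified Bohm potential $Q(n)$ from \eqref{Qdef}. The Laplacian contribution is
\begin{equation*}
 \frac{2\Delta n}{n_1}\,\frac{\phi^0_{-3/2}}{(\phi^0_{-1/2})^2}\;\sim\;\frac{2\,n_{xx}}{\sqrt{2\pi T}}\cdot\frac{3\sqrt{2\pi T}}{n}\;=\;\frac{6\,n_{xx}}{n},
\end{equation*}
while the $\abs{\nabla n}^2$ coefficient requires a genuine cancellation: the ratios $\phi^0_{-5/2}/(\phi^0_{-1/2})^3$ and $(\phi^0_{-3/2})^2/(\phi^0_{-1/2})^4$ both scale like $T$, with leading behaviors $30\pi T/n^2$ and $18\pi T/n^2$, so the bracket in \eqref{Qdef} reduces to $-6\pi T/n^2$, which then pairs with $1/n_1^2=1/(2\pi T)$ to yield the finite contribution $-3\,n_x^2/n^2$. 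Collecting both pieces and applying the elementary identity $\tfrac{1}{2}(\sqrt{n})_{xx}/\sqrt{n}=n_{xx}/(4n)-n_x^2/(8n^2)$ produces
\begin{equation*}
 Q(n)\;\longrightarrow\;-\tfrac{1}{2}\,\frac{1}{\sqrt{n}}\,\frac{\pt^2\sqrt{n}}{\pt x^2}.
\end{equation*}

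Substitution into \eqref{semi_diffu} then yields \eqref{T0_BE_diffu1} directly; for the hydrodynamic system I would rewrite $\ptt(nu)+\ptx{}(nu^2)=n(u_t+u\,u_x)$ using the continuity equation and divide by $n$ to recover \eqref{T0_BE_hydro1}. The main technical obstacle is precisely the cancellation in the $\abs{\nabla n}^2$ term: its two summands diverge individually like $T^{-1}$ after division by $n_1^2$, so the $\Gamma$-coefficients from \eqref{phi0limBE1} must be tracked with full precision to recover the exact prefactor $-1/2$; any slip in the $\Gamma(k+1/2)$ ratios would corrupt the final form of the limit Bohm potential.
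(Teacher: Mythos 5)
Your computation is correct and follows exactly the route the paper intends: substitute the asymptotics \eqref{phi0limBE1} with $s=-\tfrac12,-\tfrac32,-\tfrac52$ into \eqref{semi_hydro} and \eqref{semi_diffu}, note that $R\equiv 0$ in one dimension, check that the pressure term vanishes, and track the $\Gamma$-prefactors through the cancellation in $Q(n)$ to obtain the limit $Q\to -\tfrac12\,(\sqrt{n})_{xx}/\sqrt{n}$. The paper states only that the result is ``readily proven'' from \eqref{phi0limBE1}, so your proposal simply supplies the details of that same argument, and all the intermediate constants ($4\pi^2T^2/n^2$, $6n_{xx}/n$, $30\pi T/n^2$, $18\pi T/n^2$, $-3n_x^2/n^2$) check out.
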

Note that \eqref{T0_BE_hydro1} are the standard one-dimensional Madelung equations.
\section*{Acknowledgements}
This work was partially supported by the Italian Ministry of University,
PRIN  ``Mathematical problems of kinetic theories and applications'', prot.\ 2009NAPTJF\_003.
\appendix
\section{Moments of Fermi and Bose distributions and related integrals}
\label{polylog}
We recall that the {\em polylogarithm of order $s$}, with $s \in \mR$, is defined in the complex unit disc 
by the power series
$$
  \Li_s(z) = \sum_{k=1}^\infty \frac{z^k}{k^s}, \qquad \abs{z} < 1, 
$$
and can be analytically continued to  a larger domain (depending on $s$). 
To our purposes it will be enough to know that $\Li_s(z)$ is always well defined, real-valued
and regular for $z \in (-\infty, 1)$, and 
\begin{equation}
\label{RZ}
  \lim_{z \to 1^-} \Li_s(z) = 
  \left\{ 
\begin{aligned}
&\zeta(s),&   &\text{if $s>1$},
\\
 &+\infty,&  &\text{if $s \leq 1$,}
\end{aligned}
\right.
\end{equation}
where $\zeta$ is the Riemann zeta function.
The polylogarithms are strictly connected with the moments of FD and BE distributions.
\begin{definition}
\label{DefiA1}
For $\lam \e^z > -1$, $\lam \not= 0$, and $s \in \mR$ we define
\begin{equation}
\label{phiDef}
  \phi_s(z) = - \frac{1}{\lam} \Li_s (-\lam \e^z),
\end{equation}
where $\Li_s$ is the polylogarithm of order $s$.
From known identities \cite{Lewin81} we have that, for $s>0$, the above definition is equivalent to 
\begin{equation}
\label{phint}
  \phi_s(z) = \frac{1}{\Gamma(s)} \int_0^\infty \frac{t^{s-1}}{e^{t-z} + \lam}\,dt
   \qquad \quad
   \text{($s > 0$)}
\end{equation}
(known as {\em Fermi integral}).
\end{definition}
Note that $\phi_s(z)$ is defined for all $z\in\mR$ if $\lam > 0$, and for $z < -\log\abs{\lam}$ if $\lam < 0$.
In particular, in the FD case, $\lambda = 1$, $\phi_s(z)$ is defined on the whole real line while in the BE case,
$\lam = -1$, only for $z<0$.
\par
The following properties of the functions $\phi_s$ can be easily deduced from the 
properties of polylogarithms (see e.g.\  Refs.\ \cite{Lewin81,Wood92}):
\begin{subequations}
\begin{align}
\label{phi_prop1}
 &\lim_{\lam \to 0} \phi_s(z) = \e^z,&  &\text{for $ z \in \mR$ and $s \in \mR$;}
\\[4pt]
\label{phi_prop1.2}
 &\phi_s(z) \sim \e^z,&  &\text{for $z \to -\infty$, and $s \in \mR$;}
\\[4pt]
\label{phi_prop2}
 &\phi_s(z) \sim \frac{z^s}{\Gamma(s+1)},&  &\text{for $z \to +\infty$, $\lam = 1$ and $s \not= -1, -2, \ldots$;}
\\[4pt]
\label{phi_prop2.2}
 &\phi_s(z) \sim \Gamma(1-s)(-z)^{s-1},&  &\text{for $z \to 0^-$, $\lam = -1$ and $s < 1$;}
 \\[4pt]
\label{phi_prop3}
 & \frac{d}{dz}\phi_s(z) = \phi_{s-1}(z),&  &\text{for $\lam \e^z > -1$ and $s \in \mR$.}
\end{align}
\end{subequations}
Here, ``$f(x) \sim g(x)$ for $x \to y$'' means $\lim_{x \to y} f(x)/g(x) = 1$.
\par
\medskip
Starting from the identity \eqref{phint}, we shall now compute explicit expressions, in terms of the functions
$\phi_s$, of all the kinds of integrals that have been encountered throughout this paper.
\begin{lemma}
\label{lemma_moments}
For  $\lam \e^z > -1$, $k = 1, 2, 3, \ldots$  and $s>0$, let us consider the integrals
\begin{equation}
\label{Idef}
   I_k^s(z) = \frac{1}{\Gamma(s)} \int_0^\infty \frac{t^{s-1}}{\big(\e^{t-z} + \lam\big)^k}\,dt.
\end{equation}
Then, $I_1^s$ is given by Eq.\ \eqref{phint} and higher values of $k$ are recursively 
obtained by
\begin{equation}
\label{Ik}
   I_{k+1}^s(z) = \frac{1}{\lambda} \left(  I_k^s(z) - \frac{1}{k} \frac{dI_k^s}{dz}(z)\right).
\end{equation}
In particular (omitting the argument $z$),
\begin{equation}
\label{I1-4}
\begin{aligned}
&I_1^s = \phi_s
\\
&I_2^s = \lam^{-1} \left(\phi_s - \phi_{s-1}\right)
\\
&I_3^s =\lam^{-2} 
   \left(\phi_s - \textstyle{\frac{3}{2}}\phi_{s-1} + \textstyle{\frac{1}{2}}\phi_{s-2}\right)
\\
&I_4^s = \lam^{-3}
     \left(\phi_s - \textstyle{\frac{11}{6}}\phi_{s-1} + \phi_{s-2} - \textstyle{\frac{1}{6}}\phi_{s-3} \right).
\end{aligned}
\end{equation}
\end{lemma}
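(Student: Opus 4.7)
The plan is to prove the three claims in sequence: the base case $I_1^s = \phi_s$, the recursion \eqref{Ik}, and then the explicit expressions \eqref{I1-4}, which will follow by iterating the recursion together with the derivative rule \eqref{phi_prop3}.

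For the base case, I would simply note that the definition \eqref{Idef} with $k=1$ is literally the Fermi integral \eqref{phint}, and hence $I_1^s(z) = \phi_s(z)$ by Definition \ref{DefiA1}. No computation needed.

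For the recursion, the idea is to differentiate $I_k^s$ under the integral sign and then use an algebraic manipulation to re-express the resulting integrand as a combination of $I_k^s$ and $I_{k+1}^s$. Concretely, I would compute
\begin{equation*}
  \frac{dI_k^s}{dz}(z) = \frac{k}{\Gamma(s)} \int_0^\infty \frac{t^{s-1}\,\e^{t-z}}{(\e^{t-z}+\lam)^{k+1}}\,dt,
\end{equation*}
which is valid because the integrand decays super-polynomially in $t$ (justification is routine given the bounds on $\lam \e^z$). Then I would split the numerator via $\e^{t-z} = (\e^{t-z}+\lam) - \lam$, which yields
\begin{equation*}
  \frac{dI_k^s}{dz}(z) = k\left(I_k^s(z) - \lam\, I_{k+1}^s(z)\right).
\end{equation*}
Solving this identity for $I_{k+1}^s$ gives precisely \eqref{Ik}.

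For the explicit formulas \eqref{I1-4}, I would iterate the recursion starting from $I_1^s = \phi_s$, using property \eqref{phi_prop3} at each differentiation step so that $\frac{d}{dz}\phi_s = \phi_{s-1}$. For instance, $I_2^s = \lam^{-1}(\phi_s - \phi_{s-1})$, and then $I_3^s = \lam^{-1}\bigl(I_2^s - \tfrac{1}{2}\tfrac{d}{dz}I_2^s\bigr) = \lam^{-2}\bigl(\phi_s - \tfrac{3}{2}\phi_{s-1} + \tfrac{1}{2}\phi_{s-2}\bigr)$, and one more step of the same kind yields the stated expression for $I_4^s$. There is no real obstacle here; the only delicate point is the exchange of derivative and integral in the recursion step, which is handled by a standard dominated-convergence argument exploiting $\lam \e^z > -1$ to keep the denominator bounded away from zero uniformly in $t$ on any compact subset of the admissible $z$-range.
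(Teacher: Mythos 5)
Your proof is correct and follows essentially the same route as the paper's: differentiate $I_k^s$ under the integral sign in $z$ to obtain the recursion (the paper calls this a ``formal derivation'' and omits the algebra you spell out with the split $\e^{t-z}=(\e^{t-z}+\lam)-\lam$), then iterate from $I_1^s=\phi_s$ using property \eqref{phi_prop3} to get \eqref{I1-4}. Your added remark on justifying the differentiation is a harmless refinement of an argument the paper treats as purely formal.
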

\begin{proof}
The recursive formula \eqref{Ik} follows immediately from a formal derivation of $I_k^s(z)$ with respect to $z$; 
Eq.\ \eqref{I1-4} follows from \eqref{phint} and \eqref{Ik} by using the property \eqref{phi_prop3}.
\end{proof}
Equation \eqref{I1-4} suggests that we can look for an expression for $I_k^s$ of this kind: 
\begin{equation}
\label{ICL}
   I_k^s = \frac{1}{\lambda^{k-1}} \sum_{j=0}^{k-1} c^k_j \phi_{s-j},
  \qquad
  k \geq 1,
\end{equation}
where $c^k_j$ are numerical coefficients (independent on $s$) to be determined.
Since
\begin{equation}
\label{DIs}
   \frac{d I_k^s}{dz}(z) = I_k^{s-1}(z)
\end{equation}
(as it is apparent from Lemma \ref{lemma_moments}),
from the recursive relation \eqref{Ik} we can write the equivalent relation
\begin{equation}
\label{newIk}
   I_{k+1}^s = \frac{1}{\lambda} \Big(  I_k^s - \frac{1}{k} I_k^{s-1} \Big).
\end{equation}
Inserting \eqref{ICL} into \eqref{newIk} yields
$$
  \sum_{j=0}^{k} c^{k+1}_j \phi_{s-j} = 
  \sum_{j=0}^{k-1} c^k_j \phi_{s-j} - \frac{1}{k}\sum_{j=1}^{k} c^k_{j-1} \phi_{s-j},
   \qquad
  k \geq 2.
$$
Equating the coefficients of $\phi_s$ ($j=0$) we obtain $c^{k+1}_0 = c^k_0$, 
and then (since $c^1_0 = 1$, as follows form \eqref{ICL} with $k=1$)
\begin{subequations}
\label{ccoeff}
\begin{equation}
\label{ccoeffA}
   c^k_0 = 1, \qquad k \geq 1;
\end{equation}
equating the coefficients of $\phi_{s-k}$ ($j = k$) we obtain $c^{k+1}_k = - \frac{1}{k} c^k_{k-1}$, and then
\begin{equation}
\label{ccoeffB}
   c^k_{k-1} = \frac{(-1)^k}{(k-1)!}, \qquad k \geq 1;
\end{equation}
finally, equating the coefficients of $\phi_{s-j}$, with $1\leq j \leq k-1$, we obtain
\begin{equation}
\label{ccoeffC}
   c^{k+1}_j = c^k_j - \frac{1}{k} c^k_{j-1}, \qquad k \geq 1,\quad  1\leq j \leq k-1.
\end{equation}
\end{subequations}
By using the recursive relations \eqref{ccoeffA}--\eqref{ccoeffC} one can easily  generate
all the coefficients of the expansion \eqref{ICL}.
\begin{proposition}
\label{prop_moments}
Let $I_k^s(z)$ be given as in the previous lemma.
Then, 
\begin{subequations}
\label{FDmom}
\begin{align}
\label{FDmomA}
& \int_{\mR^d} \frac{1}{\big(\e^{\frac{\abs{p}^2}{2T} - z} + \lam\big)^k}\,dp 
   = n_d I^{\frac{d}{2}}_k(z),
\\
\label{FDmomB}
& \int_{\mR^d} \frac{p_i p_j}{\big(\e^{\frac{\abs{p}^2}{2T} - z} + \lam\big)^k}\,dp 
    = \delta_{ij} n_d T I^{\frac{d}{2}+1}_k(z),
\\
\label{FDmomC}
&  \int_{\mR^d} \frac{p_i p_j p_k p_l}{\big(\e^{\frac{\abs{p}^2}{2T} - z} + \lam\big)^k}\,dp 
   = \left(\delta_{ij}\delta_{kl} + \delta_{ik}\delta_{jl}  + \delta_{il}\delta_{jk}  \right)  n_d T^2 I^{\frac{d}{2}+2}_k(z),
\end{align}
\end{subequations}
where, as usual, $n_d := (2\pi T)^\frac{d}{2}$.
\end{proposition}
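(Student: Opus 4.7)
The plan is to reduce all three integrals to the one-dimensional radial integrals $I_k^s(z)$ of Lemma \ref{lemma_moments} via the substitution $t = |p|^2/(2T)$, combined with isotropy to handle the tensorial structure in \eqref{FDmomB} and \eqref{FDmomC}. Since the weight $\bigl(\e^{|p|^2/2T-z} + \lam\bigr)^{-k}$ depends only on $|p|$, all the angular information is determined by symmetry, and the radial part collapses onto the integral \eqref{Idef}.

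For \eqref{FDmomA}, I would pass to spherical coordinates and write the integral as $\omega_{d-1}\int_0^\infty r^{d-1} \bigl(\e^{r^2/2T - z} + \lam\bigr)^{-k}\, dr$, where $\omega_{d-1} = 2\pi^{d/2}/\Gamma(d/2)$ is the surface measure of the unit sphere. The substitution $t = r^2/(2T)$ converts $r^{d-1}dr$ into $2^{d/2-1} T^{d/2} t^{d/2-1}\,dt$, so the integral becomes $\omega_{d-1}\, 2^{d/2-1} T^{d/2}\, \Gamma(d/2)\, I_k^{d/2}(z)$. The identity $\omega_{d-1}\Gamma(d/2) = 2\pi^{d/2}$ immediately collapses the prefactor to $(2\pi T)^{d/2} = n_d$.

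For \eqref{FDmomB} and \eqref{FDmomC}, I would first invoke isotropy. Since the integrand is rotationally invariant except for the explicit polynomial in $p$, the tensor $\int p_i p_j f(|p|)\, dp$ must be a scalar multiple of $\delta_{ij}$, determined by contracting $i=j$: one recovers $\int p_i p_j f(|p|)\,dp = (\delta_{ij}/d)\int|p|^2 f(|p|)\,dp$. Similarly, for the rank-four moment, the only admissible isotropic tensor is a multiple of $\delta_{ij}\delta_{kl} + \delta_{ik}\delta_{jl} + \delta_{il}\delta_{jk}$; contracting over two pairs of indices and noting that the contraction of the symmetric tensor equals $d(d+2)$ gives the scalar coefficient as $\frac{1}{d(d+2)}\int|p|^4 f(|p|)\,dp$. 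Thus the problem reduces to computing the radial integrals of $|p|^2$ and $|p|^4$ against the same weight.

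The radial integrals are handled exactly as in the first step, substituting $t = r^2/(2T)$, which raises the exponent of $t$ by $1$ or $2$ and thereby shifts the index $s$ of $I_k^s$ to $d/2+1$ or $d/2+2$. The only bookkeeping to be done is combining the arithmetic prefactors via the identities $\omega_{d-1}\Gamma(d/2+1) = d\pi^{d/2}$ and $\omega_{d-1}\Gamma(d/2+2) = \tfrac{1}{2}d(d+2)\pi^{d/2}$; in both cases the extra factors precisely cancel the $1/d$ and $1/[d(d+2)]$ from the isotropy argument, leaving $n_d T I_k^{d/2+1}(z)$ in \eqref{FDmomB} and $n_d T^2 I_k^{d/2+2}(z)$ in \eqref{FDmomC}. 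There is no real obstacle here beyond keeping track of the powers of $2$, $\pi$ and $T$, which all conspire to reconstruct $n_d = (2\pi T)^{d/2}$ in each formula.
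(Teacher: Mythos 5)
Your proof is correct, and for \eqref{FDmomA} and \eqref{FDmomB} it coincides with the paper's: both reduce to the radial integral via $t=|p|^2/(2T)$ (the paper records this once and for all as the formula for $\int |p|^{2m}(\cdot)^{-k}\,dp$, Eq.\ \eqref{FDmomAUX}) and then use isotropy to restore the $\delta_{ij}$. The genuine difference is in \eqref{FDmomC}. You invoke the classification of $O(d)$-invariant rank-four tensors: since the integrand's weight is radial and $p_ip_jp_kp_l$ is totally symmetric, the moment tensor must be a scalar multiple of $\delta_{ij}\delta_{kl}+\delta_{ik}\delta_{jl}+\delta_{il}\delta_{jk}$, and the scalar is fixed by the double contraction $\frac{1}{d(d+2)}\int|p|^4(\cdot)\,dp$. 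The paper instead avoids appealing to that classification: it proves $\int p_i^4(\cdot)\,dp = 3\,n_dT^2 I_k^{d/2+2}$ by an induction on the dimension (stepping $d\to d+2$ and checking $d=1,2$ by hand), then obtains the off-diagonal moments $\int p_i^2p_j^2(\cdot)\,dp$ by subtracting from $\int|p|^4(\cdot)\,dp = d(d+2)\,n_dT^2I_k^{d/2+2}$, and finally observes that the two cases are exactly encoded by the symmetrized Kronecker combination. Your route is shorter and more conceptual, but it presupposes (or requires you to prove) that every isotropic totally symmetric rank-four tensor is proportional to the symmetrized delta product --- equivalently, that the diagonal entries equal three times the off-diagonal ones; the paper's induction establishes precisely this ratio by direct computation, so it is more elementary and self-contained. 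Either way the prefactor bookkeeping you describe is right and reassembles $n_d=(2\pi T)^{d/2}$ as claimed.
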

\begin{proof}
By using polar coordinates it is easily shown that
\begin{equation}
\label{FDmomAUX}
   \int_{\mR^d} \frac{\abs{p}^{2m}}{\big(\e^{\frac{\abs{p}^2}{2T} - z} + \lam\big)^k}\,dp
    =  \frac{ n_d\,  (2T)^m \Gamma(\frac{d}{2}+m)}{\Gamma(\frac{d}{2})}\, I_k^{\frac{d}{2}+m}(z).
\end{equation}
This formula  immediately yields  Eq.\ \eqref{FDmomA} and, by obvious symmetry considerations, 
Eq.\ \eqref{FDmomB}.
The derivation of Eq.\ \eqref{FDmomC} requires more explanations.
We first show that
\begin{equation}
\label{case1}
  J_d(z) :=  \int_{\mR^d} \frac{p_i^4}{\big(\e^{\frac{\abs{p}^2}{2T} - z} + \lam\big)^k}\,dp
  = 3 n_d T^2 I_k^{\frac{d}{2}+2}(z)
\end{equation}
(which is of course independent on $i$).
We proceed by double induction on $d$. 
The cases $d=1,2$ can be easily verified by direct computations. 
Then, assuming \eqref{case1} to be valid for $d$, for $d+2$ we can write
$$
  J_{d+2}(z) =  
  \int_{\mR^2} J_d\left(z-\frac{\abs{q}^2}{2T} \right) \,dq
  = 3 n_d 2\pi T^2 \int_0^\infty I_k^{\frac{d}{2}+2} \left(z - \frac{\rho^2}{2T}\right) \rho \,d\rho.
$$ 
From Eq.\ \eqref{DIs} we obtain therefore
$$
 J_{d+2}(z) = 3 n_d 2\pi T^3 I_k^{\frac{d}{2}+3} \! \left(z - \frac{\rho^2}{2T}\right) \Big|^0_{+\infty}
 = 3 n_{d+2} T^2 I_k^{\frac{d+2}{2}+2}(z),
$$
which proves \eqref{case1} by induction.
On the other hand, Eq.\ \eqref{FDmomAUX} yields
$$
 \int_{\mR^d} \frac{\abs{p}^4}{\big(\e^{\frac{\abs{p}^2}{2T} - z} + \lam\big)^k}\,dp 
 = d(d+2)n_d T^2 I_k^{\frac{d}{2}+m}(z)
$$
and then, using 
$\abs{p}^4 = \left( \sum_i p_i^2 \right)^2
= \sum_i p_i^4 + \sum_{i\not=j} p_i^2p_j^2$ (and symmetry considerations), 
we obtain
\begin{equation}
\label{case2}
   \int_{\mR^d} \frac{p_i^2 p_j^2}{\big(\e^{\frac{\abs{p}^2}{2T} - z} + \lam\big)^k}\,dp
   = n_d T^2 I_k^{\frac{d}{2}+2}(z).
\end{equation}
for $i \not= j$.
The two cases, \eqref{case1} and \eqref{case2}, are summarized by \eqref{FDmomC}.
\end{proof}
\section{Postponed proofs}
\label{AppB}
\subsection{Proof of Proposition \ref{pr_Qclosure}}
\label{proof1}
According to the short notations adopted from Subsec.\ \ref{S3.1} on, 
let us denote $\cG_{A,B}$ by $\cG$ and $\frac{1}{T}h_{A,B}$ by $h$ (definition \eqref{newh}). 
Recalling, moreover, the formalism introduced in Sec.\ \ref{Sec_WBGK}, we can write
$$
  p_j \frac{\pt \cG}{\pt x_j}  = \frac{i}{\eps}\left\{\frac{1}{2}\abs{p}^2, \cG \right\}_\# = 
 \frac{i}{\eps} \left\{Th + p_j B_j - \frac{1}{2}\abs{B}^2 + A, \cG \right\}_\#
$$  $$
 =  \frac{i}{\eps} \left\{p_j B_j, \cG \right\}_\#
     +  \frac{i}{\eps} \left\{A - \frac{1}{2}\abs{B}^2, \cG \right\}_\#
  =  \frac{i}{\eps} \left\{p_j B_j, \cG \right\}_\#
    + \Theta_\eps\left[A - \frac{1}{2}\abs{B}^2\right] \cG,
$$
where we used the fact that $\Op_\eps(\cG)$ is, by definition \eqref{MEPg}, a function of 
$\Op_\eps\left(h \right)$ and then $\left\{h,\cG\right\}_\# = 0$, because it is the inverse 
Weyl quantization of a vanishing commutator. 
Since, from a direct computation, 
$$
  \frac{i}{\eps} \left\{p_j B_j, \cG \right\}_\# = 
  - p_j \frac{\pt B_j}{\pt x_k}  \frac{\pt \cG}{\pt p_k}  + B_j \frac{\pt \cG}{\pt x_j},
$$
then from the previous identity we have
$$
 \ptx{j} \bk{p_i p_j \cG} =  
 - \frac{\pt B_j}{\pt x_k}\bk{p_i p_j \frac{\pt \cG}{\pt p_k}}
 + B_j \ptx{j} \bk{p_i \cG} 
 + \bk{p_i\Theta_\eps\left[A - \frac{1}{2}\abs{B}^2\right] \cG }
$$ $$
 = \frac{\pt B_j}{\pt x_i} \bk{p_j \cG}
 + \frac{\pt B_j}{\pt x_j} \bk{p_i \cG}
 + B_j \ptx{j} \bk{p_i \cG} 
 + \bk{\cG} \ptx{i} \left( A - \frac{1}{2}\abs{B}^2 \right)
$$ $$
 = \frac{\pt B_j}{\pt x_i} J_j + \frac{\pt B_j}{\pt x_j} J_i
 + B_j  \frac{\pt J_i}{\pt x_j} 
 + n \left( \frac{\pt A}{\pt x_i}  - B_j \frac{\pt B_j}{\pt x_i} \right)
$$
(where Eq.\ \eqref{momTheta} was used), which yields Eq.\ \eqref{Qclosure}.
\hfill $\Box$
\subsection{Proof of Lemma \ref{lemmaTaylor}}
\label{proof_lemmaTaylor}
Lemma \ref{lemmaTaylor} follows from elementary manipulations of formal Taylor expansions.
In order to shorten the notations, let us introduce the $(d+1)$-dimensional vectors
$$
\begin{aligned}
 &m := (n,J_1,\ldots,J_d),
\\
 &\mu = (A,B_1,\cdots,B_d),
\\
 &f = f(\mu) = \left(\bk{\cG}, \bk{p_1 \cG},\ldots,\bk{p_d \cG}  \right),
\\
  &f^{(k)} = f^{(k)}(\mu) = \left(\bk{\cG_k}, \bk{p_1 \cG_k},\ldots,\bk{p_d \cG_k}  \right).
\end{aligned}
$$ 
The constraint system  \eqref{constr}, in these notations, reads as follows:
\begin{equation}
\label{compconstr}
    f(\mu) = m.
\end{equation}
Note that $f$ has a double dependence on $\eps$: one is direct (which leads to
the expansion \eqref{semiG}, i.e.\ to the terms $f^{(k)}$), 
and the other is through the Lagrange multipliers, which are expanded
as  $\mu = \mu^{(0)} + \eps\mu^{(1)} + \eps^2\mu^{(2)} + \cdots$. 
Then, we regard $f$ as $f(\eps,\mu(\eps))$, whose Taylor expansion at $\eps=0$ can be written
in this way:
\begin{multline*}
   f_i(\eps,\mu(\eps)) = f_i^{(0)}(\mu^{(0)}) + \eps\,\frac{\pt f_i^{(0)}}{\pt \mu_j}(\mu^{(0)}) \mu^{(1)}_j
\\
   + \eps^2 \left[\frac{1}{2}\frac{\pt^2 f_i^{(0)}}{\pt \mu_j  \pt\mu_k}(\mu^{(0)})  \mu^{(1)}_j \mu^{(1)}_k 
   + \frac{\pt f_i^{(0)}}{\pt \mu_j}(\mu^{(0)}) \mu^{(2)}_j
   + f^{(2)}_i(\mu^{(0)})\right]  + \cdots.
\end{multline*}
(where we took into account that $f^{(1)} = 0$, from \eqref{gexpB}, and, for the sake brevity,
the third-order term was not shown).
Since the moments $m$ do not depend on $\eps$, the constraint equation  \eqref{compconstr}
is expanded as follows:
$$
\begin{aligned}
   &f_i^{(0)}(\mu^{(0)})  = m_i
\\
  &\frac{\pt f_i^{(0)}}{\pt \mu_j}(\mu^{(0)}) \mu^{(1)}_j = 0
\\
 &\frac{1}{2}\frac{\pt^2 f_i^{(0)}}{\pt \mu_j  \pt\mu_k}(\mu^{(0)})  \mu^{(1)}_j \mu^{(1)}_k 
   + \frac{\pt f_i^{(0)}}{\pt \mu_j}(\mu^{(0)}) \mu^{(2)}_j
   + f^{(2)}_i(\mu^{(0)}) = 0
\\
 &\cdots
\end{aligned}
$$
The first equation is Eq.\ \eqref{expmuA}, the second one implies $\mu^{(1)} =0$
(i.e.\ $A^{(1)} = B^{(1)} = 0$), the third one is Eq.\  \eqref{expmuB} and, finally,
the fourth one (which has been omitted for brevity) implies $\mu^{(3)} =0$
(i.e.\ $A^{(3)} = B^{(3)} = 0$).
\hfill $\Box$
\subsection{Proof of Lemma \ref{lemmaG2}}
\label{proof_lemmaG2}
By using Eq.\ \eqref{gexpC} with $n=1$ we obtain
\begin{equation}
\label{g2short}
 \cG_2 = - \frac{\cG_0\, \Exp_2(h) + \cG_0\,\#_2\,\e^h}{\e^h + \lam}.
\end{equation}
The first term contains $\Exp_2(h)$, whose explicit expression can be taken from
Eq.\ (5.14) of Ref.\ \cite{DMR05} and reads as follows:
\begin{equation}
\label{Exp2}
 \Exp_2(h) = -\frac{\e^h}{8} \left( X_{ij} P_{ij} - S_{ij}S_{ji} + \textstyle{\frac{1}{3}} X_{ij}P_iP_j - 
 \textstyle{\frac{2}{3}} S_{ij}P_iX_j  + \textstyle{\frac{1}{3}} P_{ij}X_iX_j \right),
\end{equation}
where we introduced the following notations:
$$
\begin{aligned}
&X_i := \frac{\pt h}{\pt x_i} = -\frac{1}{T} \left[ (p-B)_k \frac{\pt B_k}{\pt x_i} + \frac{\pt A}{\pt x_i}\right]
\\
&P_i := \frac{\pt h}{\pt p_i} = \frac{1}{T}(p-B)_i
\\
&X_{ij} := \frac{\pt^2 h}{\pt x_i \pt x_j}
   = \frac{1}{T}\left[\frac{\pt B_k}{\pt x_i} \frac{\pt B_k}{\pt x_j} - (p-B)_k \frac{\pt^2 B_k}{\pt x_i \pt x_j}  
     - \frac{\pt^2 A}{\pt x_i \pt x_j}\right]
\\
&P_{ij} := \frac{\pt^2 h}{\pt p_i \pt p_j} = \frac{1}{T}\delta_{ij}
\\
&S_{ij} := \frac{\pt^2 h}{\pt x_i \pt p_j} = -\frac{1}{T}  \frac{\pt B_j}{\pt x_i}\,.
\end{aligned}
$$
(we recall that $h = \frac{1}{T} h_{A,B}$ is given by \eqref{newh}).
The other term, $\cG_0\,\#_2\,\e^h$, is a second-order Moyal product, given by \eqref{MoyalExpansion}.
Using the notations just introduced, we obtain:
$$
\begin{aligned}
\cG_0\,\#_2\,\e^h = &-\frac{1}{4} \left(\frac{1}{2} \frac{\pt^2 \cG_0}{\pt x_i \pt x_j}  \frac{\pt^2 \e^h}{\pt p_i \pt p_j} 
- \frac{\pt^2 \cG_0}{\pt x_i \pt p_j}  \frac{\pt^2 \e^h}{\pt p_i \pt x_j} 
+ \frac{1}{2} \frac{\pt^2 \cG_0}{\pt p_i \pt p_j}  \frac{\pt^2 \e^h}{\pt x_i \pt x_j}  \right)
\\
 =  &-\frac{\e^h}{8}\big[ \left(2F_2X_iX_j - F_1(X_{ij}+X_iX_j) \right)\left(P_{ij} + P_iP_j\right)
 \\
  &- 2\left(2F_2X_iP_j - F_1(S_{ij} + X_iP_j) \right)\left(S_{ij} + P_iX_j\right) 
 \\
  &+ \left( 2F_2P_iP_j - F_1(P_{ij}+P_iP_j )\right)\left(X_{ij} + X_iX_j\right) \big],
\end{aligned}
$$
where, for $k \geq 0$, we define.
\begin{equation}
\label{Fdef}
  F_k = \frac{\e^{kh}}{(\e^h + \lam)^{k+1}}.
\end{equation}
Putting together the two terms we obtain the following expression for $\cG_2$:
\begin{equation}
\label{g2}
\begin{aligned}
 \cG_2 &=
 \frac{1}{8} \left(X_{ij} P_{ij} - S_{ij}S_{ji} \right) \left(F_1 - 2F_2 \right)
 \\
 & +   \frac{1}{8} \left(X_{ij}P_iP_j -2S_{ij}P_iX_j + P_{ij}X_iX_j \right) 
 \left(\textstyle{\frac{1}{3}F_1 - 2F_2 + 2F_3} \right).
\end{aligned}
\end{equation}
Note that in the Maxwell-Boltzmann case, $\lam = 0$, we have 
$F_k = \e^{-h}$ for all $k\geq0$ and, therefore, 
$$
  F_1 - 2F_2 = -\e^{-h}, \qquad 
  \textstyle{\frac{1}{3}}F_1 - 2F_2 + 2F_3 = \textstyle{\frac{1}{3}}\e^{-h}.
$$
Then, in this case, Eq.\  \eqref{g2} reduces (as it should) to  $\Exp_2(-h)$, which is given 
by \eqref{Exp2}  with the suitable changes of sign.
\par
Now, defining $q = p-B$, taking the moments $\bk{\cG_2}$ and  $\bk{q_i \cG_2}$ from Eq.\ \eqref{g2},
and taking account of vanishing integrals of odd functions, we get
\begin{equation}
\label{G2mom0}
\begin{aligned}
\bk{\cG_2} &=  \frac{1}{8T^2} \left[  \frac{\pt B_j}{\pt x_k}  
  \left( \frac{\pt B_j}{\pt x_k} - \frac{\pt B_k}{\pt x_j} \right) - \Delta A \right] \bk{F_1 - 2F_2}   
\\
  &+ \frac{1}{8T^3}  \left[  \frac{\pt B_j}{\pt x_k}  
  \left( \frac{\pt B_j}{\pt x_l} - 2\frac{\pt B_l}{\pt x_j} \right) -  \frac{\pt^2 A}{\pt x_k \pt x_l}
  + \frac{\pt B_k}{\pt x_j} \frac{\pt B_l}{\pt x_j}\right] 
 \bk{q_k q_l \left(\textstyle{\frac{1}{3}F_1 - 2F_2 + 2F_3} \right)}
\\
  &+ \frac{1}{8T^3} \abs{\nabla A}^2 \bk{ \textstyle{\frac{1}{3}F_1 - 2F_2 + 2F_3}}
\end{aligned}
\end{equation}
and
\begin{equation}
\label{G2mom1}
\begin{aligned}
\bk{q_i\cG_2} &=  
   - \frac{1}{8T^2} \Delta B_j  \bk{ q_i q_j \left(F_1 - 2F_2\right)}
\\
  &- \frac{1}{4T^3}  \frac{\pt A}{\pt x_k} \left( \frac{\pt B_k}{\pt x_j} - \frac{\pt B_j}{\pt x_k} \right)
   \bk{q_i q_j\left(\textstyle{\frac{1}{3}F_1 - 2F_2 + 2F_3} \right)}
\\
  &- \frac{1}{8T^3} \frac{\pt^2 B_l}{\pt x_j \pt x_k}
  \bk{ q_i q_j q_k q_l\left(\textstyle{\frac{1}{3}F_1 - 2F_2 + 2F_3} \right)}.
\end{aligned}
\end{equation}
The moments of functions $F_k$ can be reduced to integrals of type $I_k^s$ (see Lemma 
\ref{lemma_moments}) by using
\begin{equation}
\label{FvsI}
F_k 
=  \frac{1}{\e^h + \lam}\left(1 - \frac{\lam}{\e^h + \lam} \right)^k 
= \sum_{j=0}^k {k \choose j} \frac{(-\lam)^j}{(\e^h + \lam)^{j+1}}. 
\end{equation}
Recalling that
$$
h = \frac{\abs{q}^2}{2T} - \frac{A}{T}, 
$$
from \eqref{FvsI} and\eqref{FDmom} we obtain
$$
\begin{aligned}
& \bk{F_1 - 2F_2} = -n_d \phi_{\frac{d}{2}-2} \Big(\frac{A}{T}\Big) 
\\
& \bk{ q_i q_j \left(F_1 - 2F_2\right)} = - \delta_{ij} n_d T  \phi_{\frac{d}{2}-1} \Big(\frac{A}{T}\Big) 
\\  
& \bk{ \textstyle{\frac{1}{3}F_1 - 2F_2 + 2F_3}} = \frac{1}{3}n_d\phi_{\frac{d}{2}-3}  \Big(\frac{A}{T}\Big) 
\\
& \bk{q_i q_j\left(\textstyle{\frac{1}{3}F_1 - 2F_2 + 2F_3} \right)} 
    =  \frac{1}{3} \delta_{ij} n_d T  \phi_{\frac{d}{2}-2}  \Big(\frac{A}{T}\Big) 
\\
& \bk{ q_i q_j q_k q_l\left(\textstyle{\frac{1}{3}F_1 - 2F_2 + 2F_3} \right)} 
    =  \frac{1}{3}\left(\delta_{ij}\delta_{kl} + \delta_{ik}\delta_{jl}  + \delta_{il}\delta_{jk}  \right) 
        n_d T^2  \phi_{\frac{d}{2}-1} \Big(\frac{A}{T}\Big) .
\end{aligned}
$$
Then, by using $\bk{p_i\cG_2} =  B_i\bk{\cG_2} + \bk{q_i\cG_2}$ and the identity
$$
  \frac{1}{T} \frac{\pt A}{\pt x_k} \,\phi_{\frac{d}{2}-2} \Big(\frac{A}{T}\Big) 
  = \ptx{k}\, \phi_{\frac{d}{2}-1} \Big(\frac{A}{T}\Big) 
$$ 
(from property \eqref{phi_prop3}), Eqs.\ \eqref{momG2} are easily obtained from  the expressions 
\eqref{G2mom0} and \eqref{G2mom1}.
\par
\hfill $\Box$
\end{document}